\documentclass[11pt,draftcls,onecolumn]{IEEEtran}

\usepackage{amssymb}

\usepackage{cite}
\usepackage{graphicx}

\usepackage[cmex10]{amsmath}
\usepackage{array}
\usepackage{comment}

\newtheorem{theorem}{Theorem}
\newtheorem{lemma}{Lemma}
\newtheorem{rem}{Remark}


\begin{document}
%
\title{An Alternative Proof of an Extremal Entropy Inequality}
%
%
%

\author{Sangwoo~Park,~Erchin~Serpedin, and Khalid Qaraqe 
\thanks{Department of Electrical and Computer Engineering, Texas A\&M University, College Station, TX 77843-3128  USA,  e-mail: serpedin@ece.tamu.edu. A small part of this paper was presented at ISIT 2012.}}
\maketitle

\begin{abstract}
This paper first focuses on deriving an alternative approach for proving an extremal entropy inequality (EEI), originally presented in \cite{Extremal:Liu}. The proposed approach does not rely on the channel enhancement technique, and has the advantage that it yields an explicit description of the optimal solution as opposed to the implicit approach of  \cite{Extremal:Liu}.
Compared with the proofs in \cite{Extremal:Liu}, the proposed alternative proof  is also simpler, more direct,  more information-theoretic, and has the additional advantage that it offers a new perspective for establishing novel as well as known challenging results such the capacity of the vector Gaussian broadcast channel, the lower bound of the achievable rate for distributed source coding with a single quadratic distortion constraint, and the secrecy capacity of the Gaussian wire-tap channel. The second part of this paper is devoted to some novel applications of the proposed mathematical results. The proposed mathematical techniques are further exploited to obtain a more simplified proof of the EEI without using the entropy power inequality (EPI),  to build the optimal solution for a special class of broadcasting channels with  private messages and to obtain a mutual information-based performance bound for the mean square-error of a linear Bayesian estimator of  a Gaussian source embedded in an additive noise channel.

\end{abstract}

\begin{IEEEkeywords}
Entropy Power Inequality  (EPI), Extremal Entropy Inequality (EEI), Data Processing Inequality, Channel Enhancement, Broadcast Channel, Wire-Tap Channel, Cramer-Rao bound, Bayesian estimation
\end{IEEEkeywords}

\IEEEpeerreviewmaketitle

\section{Introduction}\label{sec1}

\IEEEPARstart{T}{he} classical entropy power inequality (EPI) was first established by Shannon \cite{Infor:Shannon}. Due to its importance and usefulness,  EPI was proved by several different authors using distinct methods. In \cite{EPI:Stam}, Stam provided the first rigorous proof, and Stam's proof was further simplified by Blachman \cite{EPI:Blachman} and Dembo et al. \cite{EPI:Dembo}, respectively. Verd\'{u} and Guo  proposed a new proof of the EPI based on the I-MMSE concept \cite{EPI:Guo}. Most recently, Rioul proved the EPI  based only on information theoretic quantities \cite{EPI:Rioul}. Before Rioul's proof, most of the reported proofs were based on de Bruijn-type identities and  Fisher information inequality, i.e., the previous proofs were conducted mainly via an  estimation-theoretic approach rather than  an  information-theoretic  approach.

Due to the significance of the EPI, numerous versions of EPIs such as Costa's EPI \cite{EPI:Costa}, the EPI for dependent random variables \cite{EPI:Johnson}, and the extremal entropy inequality (EEI) \cite{Extremal:Liu}  have been proposed. Among the EPIs, the extremal entropy inequality is especially prominent since it can be adapted to several important applications investigated  recently in the wireless communications area. In \cite{Extremal:Liu}, Liu and Viswanath  proposed the extremal entropy inequality, motivated by multi-terminal information theoretic problems such as the vector Gaussian broadcast channel and the distributed source coding with a single quadratic distortion constraint, and suggested several applications for the extremal entropy inequality. The EEI is an entropy power inequality which includes a covariance constraint. Because of the covariance constraint, the EEI could not be proved directly by using the classical EPI. Therefore,   a  powerful technique, referred to as the channel enhancement technique \cite{CapBroad:Shamai}, was adopted in the proofs reported in \cite{Extremal:Liu}.

The proofs proposed in \cite{Extremal:Liu} proceed as follows. First, the extremal entropy inequality is cast as an optimization problem. Using the channel enhancement technique, which relies mainly  on Karush-Kuhn-Tucker (KKT) conditions, an alternative optimization problem, whose maximum value is larger than the maximum value of the original problem, is proposed, and the alternative problem is solved using the EPI. Finally, the proof is completed by showing that the maximum value of the alternative problem is equal to the maximum value of the original problem. Even though Liu and Viswanath proposed two kinds of proofs, a direct proof and a perturbation proof, both proofs are commonly based on the channel enhancement technique, and they are derived in a similar way except de Bruijn's identity is adapted in the perturbation proof.

The main theme of this paper is to develop a novel mathematical framework to prove the extremal entropy inequality without using the channel enhancement technique. Since the channel enhancement technique is adapted to prove not only the extremal entropy inequality but also the capacity of several different kinds of Gaussian channels, e.g., the capacity of the Gaussian broadcast channel and the secrecy capacity of the Gaussian wire-tap channel, by finding an alternative proof for the extremal entropy inequality,  one can also find novel techniques to calculate the capacity of Gaussian broadcast channel, the secrecy capacity of Gaussian wire-tap channel, etc. More important is the fact that the mathematical framework and tools developed in the first part of this paper to achieve an alternative proof of the extremal entropy inequality without using the channel enhancement technique are exploited in the second part of this paper to achieve a second proof of EEI (a more simplified proof of the EEI that does not use neither the EPI  nor the worst additive noise lemma),  to obtain the optimal solution of a special class of  broadcasting problems that assume a private message,   and to characterize the minimum mean-square error (MMSE) performance of linear Bayesian estimators of a Gaussian source in additive noise channels.

The first proof of the EEI,  proposed in the first part of this paper, exploits mainly  four techniques: the data processing inequality, the moment generating function, the worst additive noise lemma, and the classical EPI. By using the data processing inequality, the worst additive noise lemma, and the classical EPI,  an upper bound is calculated. Then, by applying the equality condition of the data processing inequality, we prove that the upper bound can be achieved. The moment generating functions are implemented to prove the achievement of the equality condition in the data processing inequality.  The second proposed proof of the EEI relies partly on the techniques and tools proposed in the first proof of the EEI, and it is further  simplified in the sense that it does not rely neither on the EPI nor on the additive worst noise lemma.

The contributions of our proof can be summarized as follows. In the first part of this paper, a first alternative proof of the EEI is proposed, and it is shown to be simpler and more direct compared with the proofs in \cite{Extremal:Liu}. The proposed proof yields a more information-theoretic approach without using the KKT conditions. The proposed approach relies on the data processing inequality, and the moment generating function helps to circumvent the step of using the KKT conditions. Moreover, by simply analyzing some properties of positive semi-definite matrices, one can bypass the step of proving the existence of the optimal solution which satisfies the KKT conditions, a step which is very complicated to accomplish. In addition, the structure of the covariance matrix of the optimal solution is mentioned in detail by using properties of positive semi-definite matrices. Therefore, the proposed approach yields an explicit description of the optimal solution as opposed to the implicit solutions in \cite{Extremal:Liu}. Furthermore, the proposed proof presents a novel investigation method not only for the extremal inequality but also for  applications such as the capacity of Gaussian broadcast channel, the secrecy capacity of Gaussian wire-tap channel, and so on. In the second part of this paper, the tools and mathematical approach used in the first part of the paper to prove the EEI are further simplified to obtain a second alternative proof of the EEI without using the EPI or the worst additive noise lemma. Two additional applications of the proposed results in finding the optimal signaling scheme for a broadcasting problem with a private message and characterizing the MMSE performance of linear Bayesian estimation schemes for Gaussian sources in additive noise channels are described as well. These applications support the usefulness of the developed mathematical results and  the versatility of the extremal entropy inequality.

The rest of this paper is organized as follows. The extremal entropy inequality without a covariance constraint and its alternative proof are presented in Section \ref{sec2}. The extremal entropy inequality and its first alternative proof, which are the main results of this paper, are described in Section \ref{sec3}. In Section \ref{sec4}, several novel applications of the EEI are introduced, including a second much simplified alternative proof of the EEI, to illustrate the usefulness and relevance of the developed mathematical framework and results. Finally, Section \ref{sec5}  concludes  this paper.

\subsection{Notations}
Throughout this paper, random vectors are denoted by capital letters such as $X$ and $Y$, matrices are represented by bold capital letters such as $\boldsymbol{\Sigma}$ and $\mathbf{R}$, and $n$ and $n$-by-$n$ denote the dimension (size) of a random vector and a matrix, respectively. All information theoretic quantities are represented by conventional notations. For example, $h(X)$ and $I(X;Y)$ stand for  differential entropy of a random vector $X$ and mutual information between  random vector $X$ and  random vector $Y$, respectively. Conditional entropy and conditional mutual information are denoted as $h(X|Y)$ and $I(X;Y|Z)$, respectively. The notation $\preceq$ or $\succeq$ stands for  positive (semi)definite partial ordering between matrices, i.e., $\boldsymbol{\Sigma}_1 \preceq \boldsymbol{\Sigma}_2$ means $\boldsymbol{\Sigma}_2-\boldsymbol{\Sigma}_1$ is a positive semidefinite matrix \cite{Matrix:Horn}. In this paper, a positive definite matrix means a strictly positive definite matrix, and $\nabla_{\boldsymbol{\Sigma}}$ stands for the Jacobian matrix with respect to $\boldsymbol{\Sigma}$. The matrix $\mathbf{I}$ denotes an $n$-by-$n$ identity matrix, and the matrix $\mathbf{0}$ stands for  an $n$-by-$n$ zero matrix. Notation $\mathbb{E}[\cdot]$ denotes an expectation with respect to all random vectors inside $[\cdot]$, and $M_X(S)$ and $M_{X|Y}(S)$   stand for the  moment generating functions of  random vector $X$ and  random vector $X$ given $Y$, respectively. For simplicity, $\log$ denotes the natural logarithm.

\section{Entropy Power Inequality}\label{sec2}
Since the extremal entropy inequality is similar to the classical entropy power inequality, we first investigate a relationship between the EEI and the EPI. Without a covariance constraint, the EEI  is equivalent to the EPI as shown in Theorem \ref{thm1}.

\begin{theorem}\label{thm1}
For an arbitrary random vector $X$ with a covariance matrix $\boldsymbol{\Sigma}_X$ and a Gaussian random vector $W_G$ with a covariance matrix $\boldsymbol{\Sigma}_W$, there exists a Gaussian random vector $\tilde{X}_G$ which satisfies the following inequality:
\begin{eqnarray}\label{thm1e1_1}
    h(X) - \mu h(X+W_G) & \leq & h(\tilde{X}_G) - \mu h(\tilde{X}_G+W_G),
\end{eqnarray}
where the constant $\mu \geq 1$, all random vectors are independent of each other, $\boldsymbol{\Sigma}_W$ is a positive definite matrix, and $\tilde{X}_G$ is a Gaussian random vector which satisfies the following:
\begin{enumerate}
\item The covariance matrix of $\tilde{X}_G$ is represented by $\boldsymbol{\Sigma}_{\tilde{X}}$, and it is proportional to $\boldsymbol{\Sigma}_W$.
\item The differential entropy of $\tilde{X}_{G}$, $h(\tilde{X}_G)$, is equal to the differential entropy of $X$, $h(X)$.
\end{enumerate}
In addition, the inequality (\ref{thm1e1_1}) is equivalent to the EPI.
\end{theorem}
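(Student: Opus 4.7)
The plan is to construct the Gaussian $\tilde{X}_G$ explicitly and then invoke the classical EPI once, exploiting the proportionality $\boldsymbol{\Sigma}_{\tilde{X}} \propto \boldsymbol{\Sigma}_W$ to make the EPI lower bound collapse into a single clean Gaussian term.

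First I would define $\tilde{X}_G$ to be Gaussian with covariance $\boldsymbol{\Sigma}_{\tilde{X}} = \alpha\,\boldsymbol{\Sigma}_W$, where $\alpha>0$ is the unique scalar chosen to enforce condition~2), i.e.\ $(2\pi e)^n \alpha^n |\boldsymbol{\Sigma}_W| = e^{2h(X)}$. Since $\boldsymbol{\Sigma}_W$ is positive definite, $\alpha$ is well-defined and strictly positive, and the resulting $\tilde{X}_G$ manifestly satisfies both 1) and 2). Note that only $h(X)$ (not $\boldsymbol{\Sigma}_X$) is used in this construction.

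Next I would apply the classical EPI to the independent sum $X+W_G$:
\begin{eqnarray*}
e^{2h(X+W_G)/n} \;\geq\; e^{2h(X)/n}+e^{2h(W_G)/n}.
\end{eqnarray*}
By construction $e^{2h(X)/n}=(2\pi e)\alpha|\boldsymbol{\Sigma}_W|^{1/n}$ and $e^{2h(W_G)/n}=(2\pi e)|\boldsymbol{\Sigma}_W|^{1/n}$, so the right-hand side equals $(2\pi e)(\alpha+1)|\boldsymbol{\Sigma}_W|^{1/n}= e^{2h(\tilde{X}_G+W_G)/n}$, because $\tilde{X}_G+W_G$ is Gaussian with covariance $(\alpha+1)\boldsymbol{\Sigma}_W$. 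Taking logarithms yields $h(X+W_G)\geq h(\tilde{X}_G+W_G)$. Since $\mu\geq 1>0$, multiplying by $-\mu$ and adding the equality $h(X)=h(\tilde{X}_G)$ to both sides gives the desired inequality~(\ref{thm1e1_1}).

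For the equivalence with the EPI, I would specialize (\ref{thm1e1_1}) to $\mu=1$: combined with $h(X)=h(\tilde{X}_G)$ it becomes $h(X+W_G)\geq h(\tilde{X}_G+W_G)$, which is exactly $e^{2h(X+W_G)/n}\geq e^{2h(X)/n}+e^{2h(W_G)/n}$, i.e.\ the EPI when one summand is Gaussian; the general EPI then follows by the standard Gaussian-perturbation/limiting argument. The converse direction (EPI $\Rightarrow$ Theorem~\ref{thm1}) is precisely the derivation of the previous paragraph.

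The single non-trivial observation in the plan is the choice $\boldsymbol{\Sigma}_{\tilde{X}}=\alpha\boldsymbol{\Sigma}_W$ with the scalar $\alpha$ calibrated so that $h(\tilde{X}_G)=h(X)$; this is the step that lets the two exponentials in the EPI consolidate into a single Gaussian entropy and is where essentially all the content of the theorem resides. The remaining manipulations are a one-line application of EPI, a sign flip (justified by $\mu\geq 1>0$), and the specialization $\mu=1$ for the equivalence; no KKT analysis or channel enhancement is invoked, which is consistent with the information-theoretic spirit announced in the introduction.
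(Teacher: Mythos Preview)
Your argument is correct and matches the paper's approach: construct $\tilde{X}_G$ with covariance $\alpha\boldsymbol{\Sigma}_W$ and $h(\tilde{X}_G)=h(X)$, apply the EPI once (the paper quotes it in the proportional-Gaussian form as Lemma~\ref{lem1}, you use the equivalent exponential form), and combine $h(X+W_G)\ge h(\tilde{X}_G+W_G)$ with the entropy equality and $\mu\ge 1$.

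One caveat on the equivalence direction: your added remark that the \emph{general} EPI (both summands arbitrary) follows from the one-Gaussian-summand case via a ``standard Gaussian-perturbation/limiting argument'' is not in fact a standard reduction and goes beyond what the paper claims or proves. The paper's equivalence is simply that (\ref{thm1e1_1}) together with $h(X)=h(\tilde{X}_G)$ returns $h(X+W_G)\ge h(\tilde{X}_G+W_G)$, which is Lemma~\ref{lem1} in the case $X_2=W_G$; you should drop the perturbation sentence or read ``the EPI'' here as the one-Gaussian-summand form. Also, specializing to $\mu=1$ is unnecessary: subtracting the entropy equality from (\ref{thm1e1_1}) and dividing by $\mu>0$ works for every $\mu\ge 1$.
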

\begin{proof}
\begin{lemma}[Entropy Power Inequality \cite{EPI:Rioul}, \cite{Infor:Cover}]\label{lem1}
For independent random vectors $X_1$ and $X_2$,
\begin{eqnarray}
\label{lem1e1_1}h(X_1+X_2) &\geq& h(\tilde{X}_{G_1}+\tilde{X}_{G_2}),
\end{eqnarray}
where $\tilde{X}_{G_1}$ and $\tilde{X}_{G_2}$ are independent Gaussian random vectors, $h(\tilde{X}_{G_1})=h(X_1)$ and $h(\tilde{X}_{G_2})=h(X_2)$, and the covariance matrices of $\tilde{X}_{G_1}$ and $\tilde{X}_{G_2}$ are proportional.
\end{lemma}

Using Lemma \ref{lem1}, the following relations are obtained:
\begin{eqnarray}\label{thm1e2_1}
h(X) & = & h(\tilde{X}_G),\nonumber\\
h(X+W_G) & \geq & h(\tilde{X}_G+W_G),
\end{eqnarray}
where $\boldsymbol{\Sigma}_{\tilde{X}}$ is proportional to $\boldsymbol{\Sigma}_W$, i.e., $\boldsymbol{\Sigma}_{\tilde{X}}=\alpha \boldsymbol{\Sigma}_W$, and $\alpha$ is an appropriate constant which satisfies $h(X) = h(\tilde{X}_G)$. Therefore, the inequality (\ref{thm1e1_1}) is derived from Lemma 1, the EPI, and the proof of the inequality (\ref{thm1e1_1}) is completed.

If the inequality (\ref{thm1e1_1}) holds, $h(X+W_G) \geq h(\tilde{X}_G+W_G)$ since $h(X) = h(\tilde{X}_G)$, and $\boldsymbol{\Sigma}_{\tilde{X}}$ is proportional to $\boldsymbol{\Sigma}_{W}$. This is exactly the same as the EPI in Lemma \ref{lem1}. Therefore, the inequality (\ref{thm1e1_1}) is equivalent to the EPI.
\end{proof}

While Theorem \ref{thm1} shows a local upper bound, i.e., the upper bound is dependent on a random vector $X$, since $\alpha$ depends on the random vector $X$, we can also find a global upper bound as shown in Theorem \ref{thm2} and the reference \cite{Extremal:Liu}.

\begin{theorem}\label{thm2}
For an arbitrary random vector $X$ with a covariance matrix $\boldsymbol{\Sigma}_X$ and a Gaussian random vector $W_G$ with a covariance matrix $\boldsymbol{\Sigma}_W$, there exists a Gaussian random vector $X_G^*$ which satisfies the following inequalities:
\begin{eqnarray}
\label{thm2e1_1}    h(X) - \mu h(X+W_G) & \leq & h(X^*_G) - \mu h(X_G^*+W_G),\\
\label{thm2e1_2}    h(\tilde{X}_G) - \mu h(\tilde{X}_G+W_G) & \leq & h(X^*_G) - \mu h(X_G^*+W_G),
\end{eqnarray}
where the  constant $\mu > 1$, all random vectors are independent of each other, $\boldsymbol{\Sigma}_W$ is a positive definite matrix, $\tilde{X}_G$ stands for the  Gaussian random vector  defined in Theorem \ref{thm1}, and $X_G^*$ is a Gaussian random vector whose  covariance matrix $\boldsymbol{\Sigma}_{X^*}$ is represented by $(\mu-1)^{-1}\boldsymbol{\Sigma}_W$.
\end{theorem}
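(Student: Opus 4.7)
The plan is to bootstrap Theorem~\ref{thm1} and reduce the entire statement to a one-dimensional calculus problem. First I would observe that inequality (\ref{thm2e1_1}) is an immediate consequence of (\ref{thm2e1_2}) once the latter is proved: given an arbitrary $X$, Theorem~\ref{thm1} furnishes a Gaussian $\tilde{X}_G$ with covariance $\alpha\boldsymbol{\Sigma}_W$ (for some appropriate $\alpha\geq 0$) such that $h(X)-\mu h(X+W_G)\leq h(\tilde{X}_G)-\mu h(\tilde{X}_G+W_G)$, and chaining this with (\ref{thm2e1_2}) yields (\ref{thm2e1_1}). Thus the only real work is to establish (\ref{thm2e1_2}), i.e., to prove that among the one-parameter family of Gaussian random vectors with covariance proportional to $\boldsymbol{\Sigma}_W$, the value $\alpha^*=(\mu-1)^{-1}$ maximizes the functional.

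Next I would carry out the reduction to a scalar problem. Writing $\boldsymbol{\Sigma}_{\tilde{X}}=\alpha\boldsymbol{\Sigma}_W$ for $\alpha\geq 0$, the Gaussian differential entropy formula gives
\begin{equation*}
h(\tilde{X}_G)-\mu h(\tilde{X}_G+W_G)=\frac{n}{2}\log\alpha-\frac{n\mu}{2}\log(\alpha+1)+C(n,\mu,\boldsymbol{\Sigma}_W),
\end{equation*}
where $C(n,\mu,\boldsymbol{\Sigma}_W)$ collects all terms that are independent of $\alpha$. Maximizing the right-hand side over $\alpha>0$ therefore reduces to maximizing $g(\alpha):=\log\alpha-\mu\log(\alpha+1)$.

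Now I would perform the straightforward optimization: setting $g'(\alpha)=\alpha^{-1}-\mu(\alpha+1)^{-1}=0$ yields the unique stationary point $\alpha^*=1/(\mu-1)$, which is positive precisely because $\mu>1$. A direct second-derivative computation, or equivalently the observation that $g(\alpha)\to -\infty$ both as $\alpha\to 0^+$ and as $\alpha\to\infty$ (the latter requiring $\mu>1$), confirms that $\alpha^*$ is a global maximizer on $(0,\infty)$. Hence the Gaussian $X_G^*$ with covariance $\boldsymbol{\Sigma}_{X^*}=(\mu-1)^{-1}\boldsymbol{\Sigma}_W$ attains the supremum of the functional over all Gaussians proportional to $W_G$, which is exactly (\ref{thm2e1_2}).

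I do not anticipate a genuine obstacle here; the theorem is essentially a corollary of Theorem~\ref{thm1} combined with elementary scalar calculus. If anything, the subtle point is that the extremum is interior and finite only because $\mu>1$ (for $\mu=1$ the functional is monotone in $\alpha$ and no global maximizer exists), which is precisely why the hypothesis has been sharpened from $\mu\geq 1$ in Theorem~\ref{thm1} to $\mu>1$ here. This also clarifies that the covariance $(\mu-1)^{-1}\boldsymbol{\Sigma}_W$ is not a coincidence but arises as the unique root of $g'(\alpha)=0$.
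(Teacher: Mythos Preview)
Your proposal is correct and follows essentially the same approach as the paper: both invoke Theorem~\ref{thm1} to reduce to the one-parameter Gaussian family with covariance $\alpha\boldsymbol{\Sigma}_W$, then optimize the resulting scalar function of $\alpha$ by elementary calculus to locate the maximizer at $\alpha^*=(\mu-1)^{-1}$. The paper also makes the same observation you do about why $\mu=1$ must be excluded.
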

\begin{proof}
The proof, here, is a little different from the proof in \cite{Extremal:Liu}. In our proof, we deal with both a local upper bound and a global upper bound while a global upper bound is directly calculated in \cite{Extremal:Liu}.

Define the function $f(\alpha)$ as follows:
\begin{eqnarray}
\label{thm2e2_1}f(\alpha) & = & h(\tilde{X}_G)-\mu h(\tilde{X}_G+W_G)\nonumber\\
& = & \frac{n}{2} \log 2\pi e \left|\alpha \boldsymbol{\Sigma}_W \right|^{\frac{1}{n}} -\frac{\mu n}{2} \log 2\pi e \left|\alpha \boldsymbol{\Sigma}_W + \boldsymbol{\Sigma}_W\right|^{\frac{1}{n}},
\end{eqnarray}
where $n$ denotes the dimension of a random vector, and $|\cdot|$  stands for the determinant of a matrix.

Since $f(\alpha)$ is unimodal, and
\begin{eqnarray}
\label{thm2e3_1} \frac{d}{d \alpha}f(\alpha)\biggr|_{\alpha=(\mu-1)^{-1}} & = & \frac{n}{2(\mu-1)^{-1}}-\frac{\mu n}{2((\mu-1)^{-1}+1)}\nonumber\\
& = & 0,\nonumber\\
\frac{d^2}{d^2 \alpha}f(\alpha)\biggr|_{\alpha=(\mu-1)^{-1}} & = & -\frac{n}{2(\mu-1)^{-2}}+\frac{\mu n}{2((\mu-1)^{-1}+1)^2}\nonumber\\
& < & 0,
\end{eqnarray}
$f(\alpha)$ is maximized when $\alpha=(\mu-1)^{-1}$.

Therefore, from Theorem \ref{thm1}, the following inequality is derived as
\begin{eqnarray}
\label{thm2e4_1}h(X) - \mu h(X+W_G) & \leq & h(\tilde{X}_G) - \mu h(\tilde{X}_G+W_G)\nonumber\\
& = & f(\alpha)\nonumber\\
& \leq & f((\mu-1)^{-1})\nonumber\\
& = & h(X^*_G) - \mu h(X_G^*+W_G).
\end{eqnarray}

The inequalities (\ref{thm2e4_1}) include inequalities (\ref{thm2e1_1}) and (\ref{thm2e1_2}), and the validity of inequalities (\ref{thm2e1_1}) and (\ref{thm2e1_2}) is  proved. The upper bound in (\ref{thm2e4_1}) is a global maximum while the upper bound derived in Theorem \ref{thm1} is a local maximum.
\begin{rem}
When $\mu=1$, the inequalities (\ref{thm2e1_1}) and (\ref{thm2e1_2}) are also satisfied. However, we cannot specify the covariance matrix of $X_G^*$ since $h(X^*_G) - \mu h(X_G^*+W_G)$ is increasing with respect to $\boldsymbol{\Sigma}_{X^*}$ and it can be infinitely large as $\boldsymbol{\Sigma}_{X^*}$ is increased. Therefore, we omit the case when $\mu=1$ in Theorem \ref{thm2}.
\end{rem}
\end{proof}

As shown in Theorems  \ref{thm1} and \ref{thm2}, for $\mu \geq 1$, $h(X) - \mu h(X+W_G)$ is maximized when  random vector $X$ is Gaussian. However, when a covariance constraint is added in the inequalities (\ref{thm1e1_1}), (\ref{thm2e1_1}) and (\ref{thm2e1_2}), we cannot prove whether a Gaussian random vector still maximizes $h(X) - \mu h(X+W_G)$ or not, based on the same methods as described in the proofs  of Theorems \ref{thm1} and \ref{thm2}, since the  covariance constraint may alter the proportionality relationship between the covariance matrices $\boldsymbol{\Sigma}_{X^*}$ and $\boldsymbol{\Sigma}_{W}$.

\section{The Extremal Entropy Inequality}\label{sec3}

In \cite{Extremal:Liu}, Liu and Viswanath  proved that a Gaussian random vector still maximizes $h(X) - \mu h(X+W_G)$ even when a covariance constraint is  considered. The inequality (\ref{thm2e1_1}) was formulated  as an optimization problem with a covariance constraint as follows:
\begin{eqnarray}\label{Exine_e1_1}
&&\max\limits_{p(X)}\hspace{3mm} h(X+W_G)-\mu h(X+V_G),\nonumber\\
&&\textrm{s.t. }\hspace{5mm} \boldsymbol{\Sigma}_X \preceq \mathbf{R},
\end{eqnarray}
where $W_G$ and $V_G$ are independent Gaussian random vectors with positive definite covariance matrices $\boldsymbol{\Sigma}_{W}$ and $\boldsymbol{\Sigma}_V$, respectively, all random vectors are independent of each other, and the maximization is done over the distribution of random vector $X$. Two proofs, a direct proof and a perturbation proof, are provided in \cite{Extremal:Liu}. Each proof approaches the problem in a different way but both proofs share an important common approach, namely the channel enhancement technique based on the KKT conditions, proposed originally in \cite{CapBroad:Shamai}.

Unlike the original proofs in \cite{Extremal:Liu}, we will prove Theorems  \ref{thm3} and \ref{thm4} without using the channel enhancement technique. Before we deal with the problem (\ref{Exine_e1_1}), we first consider a simpler case of it next.

\begin{theorem}\label{thm3}
For an arbitrary random vector $X$ with a covariance matrix $\boldsymbol{\Sigma}_X$, a Gaussian random vector $W_G$ with a covariance matrix $\boldsymbol{\Sigma}_W$, and a positive semi-definite matrix $\mathbf{R}$, there exists a Gaussian random vector $X_G^*$ with a covariance matrix $\boldsymbol{\Sigma}_{X^*}$ which satisfies the following inequality:
\begin{eqnarray}\label{thm3e1_1}
    h(X) - \mu h(X+W_G) & \leq & h(X_G^*) - \mu h(X_G^*+W_G),
\end{eqnarray}
where the  constant $\mu \geq 1$, all random vectors are independent of each other, $\boldsymbol{\Sigma}_W$ is a positive definite matrix, $\boldsymbol{\Sigma}_X \preceq \mathbf{R}$, $\boldsymbol{\Sigma}_{X^*} \preceq \mathbf{R}$.
\end{theorem}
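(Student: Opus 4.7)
The strategy is to combine Theorem \ref{thm1} with an optimization over the Gaussian constraint set $\{\Sigma : 0 \preceq \Sigma \preceq \mathbf{R}\}$. First, I would apply Theorem \ref{thm1} to upper bound $h(X) - \mu h(X+W_G)$ by $h(\tilde{X}_G) - \mu h(\tilde{X}_G + W_G)$, where $\tilde{X}_G$ is Gaussian with covariance $\alpha \boldsymbol{\Sigma}_W$ and $h(\tilde{X}_G) = h(X)$. This reduces the non-Gaussian input problem to a scalar analysis in $\alpha > 0$, precisely the setting of Theorem \ref{thm2}. The covariance constraint $\boldsymbol{\Sigma}_X \preceq \mathbf{R}$ enters via the maximum-entropy inequality $h(X) \leq \tfrac{1}{2}\log((2\pi e)^n |\mathbf{R}|)$, which forces the scale $\alpha$ to satisfy $\alpha^n |\boldsymbol{\Sigma}_W| \leq |\mathbf{R}|$.

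Next, I would split the argument into two cases according to whether the unconstrained Gaussian optimum $(\mu-1)^{-1}\boldsymbol{\Sigma}_W$ identified in Theorem \ref{thm2} satisfies $(\mu-1)^{-1}\boldsymbol{\Sigma}_W \preceq \mathbf{R}$. In the non-binding case, Theorem \ref{thm2} directly supplies a Gaussian $X_G^*$ of covariance $(\mu-1)^{-1}\boldsymbol{\Sigma}_W$ satisfying the structural constraint, and the desired inequality follows immediately. In the binding case, the constrained maximizer must lie on the boundary of the compact set $\{\Sigma : 0 \preceq \Sigma \preceq \mathbf{R}\}$; its existence follows from the extreme value theorem applied to the continuous objective $\tfrac{1}{2}\log|\Sigma|-\tfrac{\mu}{2}\log|\Sigma+\boldsymbol{\Sigma}_W|$. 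I would then use structural properties of positive semidefinite matrices (simultaneous diagonalization of $\mathbf{R}$ and $\boldsymbol{\Sigma}_W$, monotonicity of $\log|\mathbf{I}+\Sigma^{-1}\boldsymbol{\Sigma}_W|$ in $\Sigma$ with respect to the PSD order) to describe this boundary maximizer $\boldsymbol{\Sigma}_{X^*}$ explicitly, thereby sidestepping the KKT machinery used in \cite{Extremal:Liu}.

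The main obstacle is precisely this binding case. The Gaussian template $\tilde{X}_G$ produced by Theorem \ref{thm1} has a covariance proportional to $\boldsymbol{\Sigma}_W$, and this shape is generally incompatible with the structural requirement $\boldsymbol{\Sigma}_{X^*} \preceq \mathbf{R}$ when $(\mu-1)^{-1}\boldsymbol{\Sigma}_W \not\preceq \mathbf{R}$. Bridging this shape mismatch appears to require the more refined tools the paper announces---the data processing inequality applied to an auxiliary Markov chain constructed from $X$ and $W_G$, the worst additive noise lemma (to replace $X$ by a Gaussian of the same covariance $\boldsymbol{\Sigma}_X$ inside $h(X)-h(X+W_G)$), and moment-generating-function identities for verifying the equality conditions---orchestrated to transfer from $\tilde{X}_G$ to a Gaussian with covariance conforming to $\preceq \mathbf{R}$ without loss in the objective. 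Carrying out this transfer cleanly, while keeping the proof independent of the channel enhancement / KKT route, is the principal technical challenge of the argument.
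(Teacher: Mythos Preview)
Your case split into ``non-binding'' versus ``binding'' is natural, and the non-binding case is handled correctly by Theorem~\ref{thm2}. The genuine gap is the one you yourself flag at the end, but it is more serious than a missing technical step: the chain you set up in the binding case cannot be completed in principle. After applying Theorem~\ref{thm1} you have
\[
h(X)-\mu h(X+W_G)\;\le\;h(\tilde X_G)-\mu h(\tilde X_G+W_G),\qquad \boldsymbol{\Sigma}_{\tilde X}=\alpha\boldsymbol{\Sigma}_W,
\]
and you then hope to bound the right side by $\max_{\boldsymbol{\Sigma}\preceq\mathbf{R}}\bigl\{\tfrac12\log|\boldsymbol{\Sigma}|-\tfrac{\mu}{2}\log|\boldsymbol{\Sigma}+\boldsymbol{\Sigma}_W|\bigr\}$. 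But $\alpha\boldsymbol{\Sigma}_W$ need not lie in $\{\boldsymbol{\Sigma}:\boldsymbol{\Sigma}\preceq\mathbf{R}\}$, and the value $h(\tilde X_G)-\mu h(\tilde X_G+W_G)$ can strictly exceed that constrained Gaussian maximum. For instance with $n=2$, $\mu=2$, $\boldsymbol{\Sigma}_W=\mathrm{diag}(0.1,1)$, $\mathbf{R}=\mathrm{diag}(1,0.1)$, and $X$ Gaussian with covariance $\mathbf{R}$, one gets $\alpha=1$, so $\tilde X_G$ has covariance $\boldsymbol{\Sigma}_W\not\preceq\mathbf{R}$, and a direct computation shows the EPI bound lies strictly above the constrained Gaussian optimum. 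The determinant constraint $\alpha^n|\boldsymbol{\Sigma}_W|\le|\mathbf{R}|$ that you extract from the maximum-entropy inequality is too weak to force $\alpha\boldsymbol{\Sigma}_W\preceq\mathbf{R}$; determinant ordering does not imply PSD ordering. So the ``transfer from $\tilde X_G$ to a feasible Gaussian without loss'' that you describe at the end is not possible---there is a loss, and it is exactly what the EPI throws away by forcing proportionality to $\boldsymbol{\Sigma}_W$.

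The paper's proof avoids this trap by never applying Theorem~\ref{thm1} to the pair $(X,W_G)$. Instead it first splits the noise, $W_G=\tilde W_G+W'_G$, and uses the worst additive noise lemma to obtain a bound involving $h(X)-\mu h(X+\tilde W_G)$ plus Gaussian correction terms. Theorem~\ref{thm2} is then applied with $\tilde W_G$ in place of $W_G$, yielding a candidate $\boldsymbol{\Sigma}_{X^*}=(\mu-1)^{-1}\boldsymbol{\Sigma}_{\tilde W}$. The whole point of the construction (Lemma~\ref{lem7}) is that $\boldsymbol{\Sigma}_{\tilde W}$ is a free parameter chosen precisely so that $(\mu-1)^{-1}\boldsymbol{\Sigma}_{\tilde W}\preceq\boldsymbol{\Sigma}_X\preceq\mathbf{R}$; the shape mismatch is absorbed into the choice of $\tilde W_G$, not into a post-hoc adjustment of $\tilde X_G$. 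The data processing inequality and the MGF calculation are then used to show that the correction terms collapse to the desired form $h(X^*_G+W_G)$, via the reverse Markov chain $X'_G\to X'_G+X^*_G+\tilde W_G+W'_G\to X'_G+X^*_G+\tilde W_G$. This is conceptually different from your plan: the ``enhancement'' happens on the noise side before any EPI-type step, not on the signal side after it.
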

\begin{proof}
When $\mathbf{R}$ is a positive definite but singular matrix, i.e., $|\mathbf{R}|=0$, the inequality (\ref{thm3e1_1}) and its covariance constraints are equivalently changed into
\begin{eqnarray}
h(\bar{X}) - \mu h(\bar{X}+\bar{W}_G) & \leq & h(\bar{X}_G^*) - \mu h(\bar{X}_G^*+\bar{W}_G),
\end{eqnarray}
where $\bar{X}$ is such that $\boldsymbol{\Sigma}_{\bar{X}} \preceq \mathbf{\bar{R}}$, $\boldsymbol{\Sigma}_{\bar{X}^*} \preceq \mathbf{\bar{R}}$, and $\mathbf{\bar{R}}$ is a positive definite matrix, as mentioned in \cite{Extremal:Liu}. When $\mu=1$, the inequality (\ref{thm3e1_1}) is easily proved by the Lemma \ref{lem2}, which will be presented later.

Therefore, without loss of generality, we assume that $\mu>1$ and $\mathbf{R}$ is a positive definite matrix. Then, the right-hand side (RHS) of the equation (\ref{thm3e1_1}) is upper-bounded by means of the following lemma.
\begin{lemma}[Worst Additive Noise \cite{EPI:Rioul}, \cite{Extremal:Liu}, \cite{WANoise:Cover}]\label{lem2}
For random vectors $X$, $X_G$, $\tilde{W}_G$, and $W'_G$,
\begin{eqnarray}\label{lem2e1_1}
I(X+\tilde{W}_G+W'_G;W'_G) & \geq & I(X_G+\tilde{W}_G+W'_G;W'_G),
\end{eqnarray}
where $X$ is an arbitrary random vector, $X_G$ is a Gaussian random vector with the covariance matrix identical to that of $X$, $\tilde{W}_G$ and $W'_G$ are Gaussian random vectors, and all random vectors are independent.
\end{lemma}

Based on Lemma \ref{lem2}, the following inequalities hold:
\begin{eqnarray}
\label{thm3e2_1}&& h(X+\tilde{W}_G+W'_G)-h(X+\tilde{W}_G+W'_G|W'_G) \geq h(X_G+\tilde{W}_G+W'_G)-h(X_G+\tilde{W}_G+W'_G|W'_G)\nonumber\\
\label{thm3e2_2}& \Longleftrightarrow & h(X+\tilde{W}_G+W'_G)-h(X+\tilde{W}_G) \geq h(X_G+\tilde{W}_G+W'_G)-h(X_G+\tilde{W}_G)\\
\label{thm3e2_3}& \Longleftrightarrow & h(X+\tilde{W}_G+W'_G) \geq h(X+\tilde{W}_G) + h(X_G+\tilde{W}_G+W'_G)-h(X_G+\tilde{W}_G),
\end{eqnarray}
where $\Longleftrightarrow$ denotes equivalence. Notice that the Gaussian random vector $W_G$ can be expressed as the sum  of two independent Gaussian random vectors $\tilde{W}_G$ and $W'_G$ whose covariance matrices satisfy:
\begin{eqnarray}\label{thm3e3_0}
\boldsymbol{\Sigma}_{W} & = & \boldsymbol{\Sigma}_{\tilde{W}} +\boldsymbol{\Sigma}_{W'},
\end{eqnarray}
where $\boldsymbol{\Sigma}_{W}$, $\boldsymbol{\Sigma}_{\tilde{W}}$, and $\boldsymbol{\Sigma}_{W'}$ are the covariance matrices of $W_G$, $\tilde{W}_G$, and $W'_G$, respectively.  Henceforth, the Gaussian random vector $W_G$ is represented as $W_G=\tilde{W}_G+W'_G$.

Based on (\ref{thm3e2_3}) and  (\ref{thm3e3_0}), the left-hand side (LHS) of the equation (\ref{thm3e1_1}) is upper-bounded as follows:
\begin{eqnarray}
\label{thm3e3_1}h(X) - \mu h(X+W_G) & = & h(X) - \mu h(X+\tilde{W}_G+W'_G)\\
\label{thm3e3_2}& \leq & h(X) - \mu \left(h(X+\tilde{W}_G) + h(X_G+\tilde{W}_G+W'_G)-h(X_G+\tilde{W}_G)\right)\\
\label{thm3e3_3}& = & h(X) - \mu h(X+\tilde{W}_G) + \mu\left(h(X_G+\tilde{W}_G)- h(X_G+\tilde{W}_G+W'_G)\right).
\end{eqnarray}
Using Theorem \ref{thm2}, if $(\mu-1)^{-1}\boldsymbol{\Sigma_{\tilde{W}}}\preceq \mathbf{R}$, the RHS of equation (\ref{thm3e3_3}) is upper-bounded as follows:
\begin{eqnarray}
\label{thm3e4_1}&& h(X) - \mu h(X+\tilde{W}_G) + \mu\left(h(X_G+\tilde{W}_G)- h(X_G+\tilde{W}_G+W'_G)\right)\\
\label{thm3e4_2}& \leq & h(X^*_G) - \mu h(X^*_G+\tilde{W}_G) + \mu\left(h(X_G+\tilde{W}_G)- h(X_G+\tilde{W}_G+W'_G)\right),
\end{eqnarray}
where $X^*_G$ is a Gaussian random vector whose covariance matrix $\boldsymbol{\Sigma}_{X^*}$ is defined as $(\mu-1)^{-1}\boldsymbol{\Sigma}_{\tilde{W}}$. Unlike Theorem \ref{thm2}, we additionally have to prove that there exists a random vector $X^*_G$ whose covariance matrix $\boldsymbol{\Sigma}_{X^*}$ satisfies
\begin{eqnarray}
\label{thm3e5_1}\boldsymbol{\Sigma}_{X^*} & = & (\mu-1)^{-1}\boldsymbol{\Sigma}_{\tilde{W}}\\
\label{thm3e5_2}& \preceq & \mathbf{R},
\end{eqnarray}
due to the covariance constraint. Since $\boldsymbol{\Sigma}_{X}\preceq \mathbf{R}$, we will prove there exists a random vector $X^*_G$ whose covariance matrix $\boldsymbol{\Sigma}_{X^*}$ satisfies
\begin{eqnarray}
\label{thm3e6_1}\boldsymbol{\Sigma}_{X^*} & = & (\mu-1)^{-1}\boldsymbol{\Sigma}_{\tilde{W}}\\
\label{thm3e6_2}& \preceq & \boldsymbol{\Sigma}_{X},
\end{eqnarray}
instead of proving (\ref{thm3e5_2}). 

Equation (\ref{thm3e4_2}) is further processed by making use of the following lemma.
\begin{lemma}[Data Processing Inequality \cite{Infor:Cover}]\label{lem3}
When three random vectors $Y_1$, $Y_2$, and $Y_3$ represent a Markov chain  $Y_1\rightarrow Y_2 \rightarrow Y_3$,
the following inequality is satisfied:
\begin{eqnarray}\label{lem3e1_1}
    I(Y_1;Y_3) \leq I(Y_1;Y_2).
\end{eqnarray}
The equality holds if and only if random vectors $Y_1$, $Y_2$, and $Y_3$ form the Markov chain: $Y_1\rightarrow Y_3 \rightarrow Y_2$.
\end{lemma}

If the inequality (\ref{thm3e6_2}) is satisfied, then we can form the Markov chain:
\begin{eqnarray}
\label{thm3e7_1}X'_G \rightarrow X'_G + X^*_G + \tilde{W}_G \rightarrow X'_G + X^*_G + \tilde{W}_G + W'_G,
\end{eqnarray}
where all random vectors are independent. Since a Gaussian random vector $X_G$ can be expressed as the summation of two independent Gaussian random vectors $X'_G$ and $X^*_G$ whose covariance matrices satisfy
\begin{eqnarray}
\boldsymbol{\Sigma}_X & = & \boldsymbol{\Sigma}_{X'}+\boldsymbol{\Sigma}_{X^*},
\end{eqnarray}
where $\boldsymbol{\Sigma}_X$, $\boldsymbol{\Sigma}_{X'}$, and $\boldsymbol{\Sigma}_{X^*}$ stand for covariance matrices of $X_G$, $X'_G$, and $X^*_G$, respectively, the Gaussian random vector $X_G$ will be  represented as $X_G=X'_G+X^*_G$.

Based on Lemma \ref{lem3}, we obtain
\begin{eqnarray}
\label{thm3e8_1} \hspace*{-3mm}&&\hspace*{-3mm} I(X'_G;X'_G + X^*_G + \tilde{W}_G + W'_G) \leq I(X'_G;X'_G + X^*_G + \tilde{W}_G )\\
\label{thm3e8_2}\hspace*{-3mm}&\Longleftrightarrow& \hspace*{-3mm}h(X'_G + X^*_G + \tilde{W}_G + W'_G)-h(X^*_G + \tilde{W}_G + W'_G) \leq h(X'_G + X^*_G + \tilde{W}_G)-h( X^*_G + \tilde{W}_G )\\
\label{thm3e8_3}\hspace*{-3mm}&\Longleftrightarrow& \hspace*{-3mm} h(X_G+ \tilde{W}_G + W'_G)-h(X^*_G + \tilde{W}_G + W'_G) \leq h(X_G + \tilde{W}_G)-h( X^*_G + \tilde{W}_G )\\
\label{thm3e8_4}\hspace*{-3mm}&\Longleftrightarrow& \hspace*{-3mm}h(X^*_G+\tilde{W}_G) - h(X_G+\tilde{W}_G) + h(X_G+\tilde{W}_G+W'_G) \leq h(X^*_G + \tilde{W}_G + W'_G).
\end{eqnarray}
The equivalence in (\ref{thm3e8_3}) is due to $X_G=X'_G+X^*_G$.

Even though we need an upper bound of the RHS term in equation (\ref{thm3e4_2}),  the equation (\ref{thm3e8_4}) generates a lower bound for the equation (\ref{thm3e4_2}) as follows:
\begin{eqnarray}
\label{thm3e9_1}&&h(X^*_G) - \mu h(X^*_G+\tilde{W}_G) + \mu\left(h(X_G+\tilde{W}_G)- h(X_G+\tilde{W}_G+W'_G)\right)\\
\label{thm3e9_2}& \geq & h(X^*_G) - \mu h(X^*_G+\tilde{W}_G+W'_G)\\
\label{thm3e9_3}& \geq & h(X^*_G) - \mu h(X^*_G+W_G).
\end{eqnarray}

However, if we can construct the following Markov chain:
\begin{eqnarray}\label{thm3e10_1}
X'_G \rightarrow X'_G + X^*_G + \tilde{W}_G + W'_G \rightarrow X'_G + X^*_G + \tilde{W}_G ,
\end{eqnarray}
and using Lemma \ref{lem3} again, it turns out that
\begin{eqnarray}\label{thm3e11_1}
I(X'_G;X'_G + X^*_G + \tilde{W}_G + W'_G) \geq I(X'_G;X'_G + X^*_G + \tilde{W}_G ),
\end{eqnarray}
and this inequality leads us to a tight upper bound.  Indeed,
\begin{eqnarray}
\label{thm3e12_1}\hspace*{-3mm}&&\hspace*{-3mm}I(X'_G;X'_G + X^*_G + \tilde{W}_G + W'_G) \geq I(X'_G;X'_G + X^*_G + \tilde{W}_G )\\
\label{thm3e12_2}\hspace*{-3mm}&\Longleftrightarrow&\hspace*{-3mm} h(X'_G + X^*_G + \tilde{W}_G + W'_G)-h(X^*_G + \tilde{W}_G + W'_G) \geq h(X'_G + X^*_G + \tilde{W}_G)-h( X^*_G + \tilde{W}_G )\\
\label{thm3e12_3}\hspace*{-3mm}&\Longleftrightarrow&\hspace*{-3mm} h(X_G+ \tilde{W}_G + W'_G)-h(X^*_G + \tilde{W}_G + W'_G) \geq h(X_G + \tilde{W}_G)-h( X^*_G + \tilde{W}_G )\\
\label{thm3e12_4}\hspace*{-3mm}&\Longleftrightarrow& \hspace*{-3mm} h(X^*_G+\tilde{W}_G) - h(X_G+\tilde{W}_G) + h(X_G+\tilde{W}_G+W'_G) \geq h(X^*_G + \tilde{W}_G + W'_G).
\end{eqnarray}
The equivalence in (\ref{thm3e12_3}) is due to $X_G=X'_G+X^*_G$.

Now using (\ref{thm3e12_4}),  the equations (\ref{thm3e4_1}) and (\ref{thm3e4_2}) are upper-bounded as follows:
\begin{eqnarray}
\label{thm3e13_1}&& h(X) - \mu h(X+\tilde{W}_G) + \mu\left(h(X_G+\tilde{W}_G)- h(X_G+\tilde{W}_G+W'_G)\right)\\
\label{thm3e13_2}& \leq & h(X^*_G) - \mu h(X^*_G+\tilde{W}_G) + \mu\left(h(X_G+\tilde{W}_G)- h(X_G+\tilde{W}_G+W'_G)\right)\\
\label{thm3e13_3}& \leq & h(X^*_G) - \mu h(X^*_G+\tilde{W}_G+W'_G)\\
\label{thm3e13_4}& = & h(X^*_G) - \mu h(X^*_G+W_G),
\end{eqnarray}
and this is exactly the same as the equation (\ref{thm3e9_3}). Therefore, the following equality is satisfied:
\begin{eqnarray}
&   & h(X^*_G) - \mu h(X^*_G+\tilde{W}_G) + \mu\left(h(X_G+\tilde{W}_G)- h(X_G+\tilde{W}_G+W'_G)\right)\nonumber\\
\label{thm3e13_4_1}
& = & h(X^*_G) - \mu h(X^*_G+\tilde{W}_G+W'_G),
\end{eqnarray}
due to  (\ref{thm3e9_3}) and (\ref{thm3e13_4}).
Now, we will prove that we can actually construct the  Markov chain (\ref{thm3e10_1}) using the following lemmas.
\begin{lemma}\label{lem4}
For independent random vectors $Y_1$ and $Y_2$, the following equality between moment generating functions (MGFs) is satisfied:
\begin{eqnarray}
M_{Y_1+Y_2}(S) & = & M_{Y_1}(S)M_{Y_2}(S),
\end{eqnarray}\label{lem4e1_1}
where $M_{Y}(S) = \mathbb{E} [e^{Y^{T}S}]$, $\mathbb{E}[ \cdot ]$ is an expectation, and superscript $T$ denotes the  transpose of a vector. For jointly Gaussian random vectors $Y_1$ and $Y_2$, this equality is a necessary and sufficient condition for the independence between $Y_1$ and $Y_2$.
\end{lemma}
\begin{lemma}\label{lem5}
For independent random vectors $Y_1$ and $Y_2$ given a random vector $Y_3$, the following equality is satisfied:
\begin{eqnarray}\label{lem5e1_1}
M_{Y_1+Y_2|Y_3}(S) & = & M_{Y_1|Y_3}(S)M_{Y_2|Y_3}(S).
\end{eqnarray}
\end{lemma}
\begin{lemma}\label{lem6}
For a Gaussian random vector $X$ with a mean $U_X$ and a covariance matrix $\boldsymbol{\Sigma}_X$, the MGF is expressed as
\begin{eqnarray}\label{lem6e1_1}
M_{X}(S) & = & \exp \left\{S^T U_X + \frac{1}{2} S^T \boldsymbol{\Sigma}_{X} S\right\}.
\end{eqnarray}
\end{lemma}

In the Markov chain (\ref{thm3e10_1}), since all random vectors are Gaussian (without loss of generality, they are assumed to have zero means), using Lemma \ref{lem6}, the following moment generating functions are presented in closed-form expression:
\begin{eqnarray}
\label{thm3e14_1} M_{Y_1|Y_3}(S) & = & \exp \left\{S^T \boldsymbol{\Sigma}_{Y_1} \boldsymbol{\Sigma}_{Y_3}^{-1} Y_3 + \frac{1}{2} S^T \left(\boldsymbol{\Sigma}_{Y_1}-\boldsymbol{\Sigma}_{Y_1} \boldsymbol{\Sigma}_{Y_3}^{-1} \boldsymbol{\Sigma}_{Y_1}\right)S\right\},\nonumber\\
M_{Y_2|Y_3}(S) & = & \exp \left\{S^T \boldsymbol{\Sigma}_{Y_2} \boldsymbol{\Sigma}_{Y_3}^{-1} Y_3 + \frac{1}{2} S^T \left(\boldsymbol{\Sigma}_{Y_2}-\boldsymbol{\Sigma}_{Y_2} \boldsymbol{\Sigma}_{Y_3}^{-1} \boldsymbol{\Sigma}_{Y_2}\right)S\right\},
\end{eqnarray}
where $Y_1=X'_G$, $Y_2=X'_G + X^*_G + \tilde{W}_G$, $Y_3=X'_G + X^*_G + \tilde{W}_G + W'_G$, and their covariance matrices are represented by $\boldsymbol{\Sigma}_{Y_1}$, $\boldsymbol{\Sigma}_{Y_2}$, and $\boldsymbol{\Sigma}_{Y_3}$, respectively. Since $\boldsymbol{\Sigma}_{\tilde{W}}+\boldsymbol{\Sigma}_{W'}$ is a positive definite matrix, there exists the inverse of $\boldsymbol{\Sigma}_{Y_3}$.

On the other hand, the MGF of $Y_1+Y_2$ given $Y_3$ is represented as
\begin{eqnarray}
&   & M_{Y_1+Y_2|Y_3}(S)\nonumber\\
& = & \exp \left\{S^T \left(\boldsymbol{\Sigma}_{Y_1}+\boldsymbol{\Sigma}_{Y_2}\right) \boldsymbol{\Sigma}_{Y_3}^{-1} Y_3 + \frac{1}{2} S^T \left(\boldsymbol{\Sigma}_{Y_1}-\boldsymbol{\Sigma}_{Y_1} \boldsymbol{\Sigma}_{Y_3}^{-1} \boldsymbol{\Sigma}_{Y_1}+\boldsymbol{\Sigma}_{Y_2}-\boldsymbol{\Sigma}_{Y_2} \boldsymbol{\Sigma}_{Y_3}^{-1} \boldsymbol{\Sigma}_{Y_2}\right)S\right\}\nonumber\\
\label{thm3e15_1} &   & \times \exp \left\{S^T \left(\boldsymbol{\Sigma}_{Y_1} - \boldsymbol{\Sigma}_{Y_2} \boldsymbol{\Sigma}_{Y_3}^{-1} \boldsymbol{\Sigma}_{Y_1} + \boldsymbol{\Sigma}_{Y_1} - \boldsymbol{\Sigma}_{Y_1} \boldsymbol{\Sigma}_{Y_3}^{-1} \boldsymbol{\Sigma}_{Y_2}\right)S\right\}\nonumber\\
\label{thm3e15_2}
& = & M_{Y_1|Y_3}(S) M_{Y_2|Y_3}(S)\exp \underbrace{\left\{S^T \left(\boldsymbol{\Sigma}_{Y_1} - \boldsymbol{\Sigma}_{Y_2} \boldsymbol{\Sigma}_{Y_3}^{-1} \boldsymbol{\Sigma}_{Y_1} + \boldsymbol{\Sigma}_{Y_1} - \boldsymbol{\Sigma}_{Y_1} \boldsymbol{\Sigma}_{Y_3}^{-1} \boldsymbol{\Sigma}_{Y_2}\right)S\right\}}_{(A)}.
\end{eqnarray}
If the term (A) in (\ref{thm3e15_2})  vanishes, $Y_1$ and $Y_2$ are independent given $Y_3$, and the Markov chain (\ref{thm3e10_1}) is obtained. Using Lemma 11, (1) in \cite{CapBroad:Shamai}, we define the covariance matrix $\boldsymbol{\Sigma}_{\tilde{W}}$ as
\begin{eqnarray}\label{thm3e16_1}
\boldsymbol{\Sigma}_{\tilde{W}} & = & \left(\left(\boldsymbol{\Sigma}_{X} + \boldsymbol{\Sigma}_{W}\right)^{-1}+\mathbf{L}\right)^{-1}-\boldsymbol{\Sigma}_{X},
\end{eqnarray}
where $\mathbf{L}\succeq \mathbf{0}$, and $\mathbf{0}$ denotes an $n$-by-$n$ zero matrix. The positive semi-definite matrix $\mathbf{L}$ must be chosen to satisfy
\begin{eqnarray}
\label{thm3e16_2} \boldsymbol{\Sigma}_{X^*} & \preceq & \boldsymbol{\Sigma}_{X},\\
\label{thm3e16_3} \mathbf{L}\boldsymbol{\Sigma}_{X'} & = & \boldsymbol{\Sigma}_{X'}\mathbf{L} = \mathbf{0},
\end{eqnarray}
where $\boldsymbol{\Sigma}_{X^*}=(\mu-1)^{-1} \boldsymbol{\Sigma}_{\tilde{W}}$, $\boldsymbol{\Sigma}_{X'}=\boldsymbol{\Sigma}_{X}-\boldsymbol{\Sigma}_{X^*}$, $\mathbf{L} \succeq \mathbf{0}$.  Lemma \ref{lem7} will prove that such a positive semi-definite matrix $\mathbf{L}$ exists.
\begin{lemma}\label{lem7}
There exists a positive semi-definite matrix  $\mathbf{L}$ which satisfies
\begin{eqnarray}
\boldsymbol{\Sigma}_{X^*} \preceq \boldsymbol{\Sigma}_{X},\quad \mathbf{L}\boldsymbol{\Sigma}_{X'} =\boldsymbol{\Sigma}_{X'} \mathbf{L} =\mathbf{0},
\end{eqnarray}
where $\boldsymbol{\Sigma}_{\tilde{W}} = \left(\left(\boldsymbol{\Sigma}_{X} + \boldsymbol{\Sigma}_{W}\right)^{-1}+\mathbf{L}\right)^{-1}-\boldsymbol{\Sigma}_{X}$, $\boldsymbol{\Sigma}_{X^*}=(\mu-1)^{-1} \boldsymbol{\Sigma}_{\tilde{W}}$, $\boldsymbol{\Sigma}_{X'}=\boldsymbol{\Sigma}_{X}-\boldsymbol{\Sigma}_{X^*}$,  and $\boldsymbol{\Sigma}_{X}$ and $\boldsymbol{\Sigma}_W$  stand for a positive semi-definite matrix and a positive definite matrix, respectively.
\begin{proof}
See Appendix \ref{appA}.
\begin{rem}
By directly using Lemma \ref{lem7} in (\ref{thm3e13_4_1}), one can prove Theorem \ref{thm3}. However, we prefer to include explicitly in the proof the step which exploits the equality condition in the data processing inequality and the moment generating function. This is due to the following reasons.  First, the included step shows how to come up with Lemma \ref{lem7}, and helps to understand the intuition behind the proposed proof. Second, the proposed step guarantees the fact that the optimal solutions must force  the factor $(A)$ in (\ref{thm3e15_2}) to be zero. In other words, the proposed step provides the necessary condition for the optimality.
\end{rem}
\end{proof}
\end{lemma}

The equation (\ref{thm3e16_1}) can be re-written as
\begin{eqnarray}
\label{thm3e17_1}&   & \boldsymbol{\Sigma}_X +\boldsymbol{\Sigma}_{\tilde{W}}= \left(\left(\boldsymbol{\Sigma}_X +\boldsymbol{\Sigma}_{W}\right)^{-1}+ \mathbf{L}\right)^{-1}\\
\label{thm3e17_2}&\Longleftrightarrow& \left(\boldsymbol{\Sigma}_X+\boldsymbol{\Sigma}_{\tilde{W}}\right)^{-1} = \left(\boldsymbol{\Sigma}_X +\boldsymbol{\Sigma}_{W}\right)^{-1}+ \mathbf{L}.
\end{eqnarray}
Since $\mathbf{L} \boldsymbol{\Sigma}_{X'} = \boldsymbol{\Sigma}_{X'} \mathbf{L} = \mathbf{0}$, by multiplying $\boldsymbol{\Sigma}_{X'}$ to both sides of the equation (\ref{thm3e17_2}),
\begin{eqnarray}
\label{thm3e18_1} \left(\boldsymbol{\Sigma}_X+\boldsymbol{\Sigma}_{\tilde{W}}\right)^{-1}\boldsymbol{\Sigma}_{X'} & = & \left(\boldsymbol{\Sigma}_X +\boldsymbol{\Sigma}_{W}\right)^{-1}\boldsymbol{\Sigma}_{X'}+ \mathbf{L}\boldsymbol{\Sigma}_{X'}\nonumber\\
\label{thm3e18_2} & = & \left(\boldsymbol{\Sigma}_X +\boldsymbol{\Sigma}_{W}\right)^{-1}\boldsymbol{\Sigma}_{X'},
\end{eqnarray}
and
\begin{eqnarray}
\label{thm3e18_3} \boldsymbol{\Sigma}_{X'} \left(\boldsymbol{\Sigma}_X+\boldsymbol{\Sigma}_{\tilde{W}}\right)^{-1} & = & \boldsymbol{\Sigma}_{X'} \left(\boldsymbol{\Sigma}_X +\boldsymbol{\Sigma}_{W}\right)^{-1}+ \boldsymbol{\Sigma}_{X'}\mathbf{L}\nonumber\\
\label{thm3e18_4} & = & \boldsymbol{\Sigma}_{X'}\left(\boldsymbol{\Sigma}_X +\boldsymbol{\Sigma}_{W}\right)^{-1}.
\end{eqnarray}
Since random vectors $Y_1$, $Y_2$, and $Y_3$ are defined as $Y_1=X'_G$, $Y_2=X'_G + X^*_G + \tilde{W}_G$, and $Y_3=X'_G + X^*_G + \tilde{W}_G + W'_G$, respectively, and they are independent of each other, their covariance matrices are represented as

\begin{eqnarray}
\label{thm3e19_1} \boldsymbol{\Sigma}_{Y_1} & = & \boldsymbol{\Sigma}_{X'},\nonumber\\
\boldsymbol{\Sigma}_{Y_2} & = & \boldsymbol{\Sigma}_{X'}+\boldsymbol{\Sigma}_{X^*}+\boldsymbol{\Sigma}_{\tilde{W}},\nonumber\\
& = & \boldsymbol{\Sigma}_{X}+\boldsymbol{\Sigma}_{\tilde{W}},\nonumber\\
\boldsymbol{\Sigma}_{Y_3} & = & \boldsymbol{\Sigma}_{X'} + \boldsymbol{\Sigma}_{X^*} + \boldsymbol{\Sigma}_{\tilde{W}} +\boldsymbol{\Sigma}_{W'} \nonumber\\
& = & \boldsymbol{\Sigma}_{X'}+\boldsymbol{\Sigma}_{X^*}+\boldsymbol{\Sigma}_{W}\nonumber\\
& = & \boldsymbol{\Sigma}_{X}+\boldsymbol{\Sigma}_{W}.
\end{eqnarray}
From the equations (\ref{thm3e18_2}) and (\ref{thm3e19_1}),  it follows that
\begin{eqnarray}\label{thm3e20_1}
&&\boldsymbol{\Sigma}_{Y_1} - \boldsymbol{\Sigma}_{Y_2} \boldsymbol{\Sigma}_{Y_3}^{-1} \boldsymbol{\Sigma}_{Y_1}\nonumber\\
& = & \boldsymbol{\Sigma}_{X'} - \left(\boldsymbol{\Sigma}_{X'}+\boldsymbol{\Sigma}_{X^*}+\boldsymbol{\Sigma}_{\tilde{W}}\right) \left(\boldsymbol{\Sigma}_{X'}+\boldsymbol{\Sigma}_{X^*}+\boldsymbol{\Sigma}_{W}\right)^{-1}\boldsymbol{\Sigma}_{X'}\nonumber\\
& = & \left(\boldsymbol{\Sigma}_{X'}+\boldsymbol{\Sigma}_{X^*}+\boldsymbol{\Sigma}_{\tilde{W}}\right) \left(\left(\boldsymbol{\Sigma}_{X'}+\boldsymbol{\Sigma}_{X^*}+\boldsymbol{\Sigma}_{\tilde{W}}\right)^{-1}\boldsymbol{\Sigma}_{X'} -  \left(\boldsymbol{\Sigma}_{X'}+\boldsymbol{\Sigma}_{X^*}+\boldsymbol{\Sigma}_{W}\right)^{-1}\boldsymbol{\Sigma}_{X'}\right)\nonumber\\
& = & \mathbf{0},
\end{eqnarray}
and from the equations (\ref{thm3e18_4}) and (\ref{thm3e19_1}), one can infer that
\begin{eqnarray}\label{thm3e20_2}
&& \boldsymbol{\Sigma}_{Y_1} - \boldsymbol{\Sigma}_{Y_1} \boldsymbol{\Sigma}_{Y_3}^{-1} \boldsymbol{\Sigma}_{Y_2}\nonumber\\
& = & \boldsymbol{\Sigma}_{X'} - \boldsymbol{\Sigma}_{X'}\left(\boldsymbol{\Sigma}_{X'}+\boldsymbol{\Sigma}_{X^*}+\boldsymbol{\Sigma}_{W}\right)^{-1} \left(\boldsymbol{\Sigma}_{X'}+\boldsymbol{\Sigma}_{X^*}+\boldsymbol{\Sigma}_{\tilde{W}}\right)\nonumber\\
& = & \left(\boldsymbol{\Sigma}_{X'}\left(\boldsymbol{\Sigma}_{X'}+\boldsymbol{\Sigma}_{X^*}+\boldsymbol{\Sigma}_{\tilde{W}}\right)^{-1} - \boldsymbol{\Sigma}_{X'}\left(\boldsymbol{\Sigma}_{X'}+\boldsymbol{\Sigma}_{X^*}+\boldsymbol{\Sigma}_{W}\right)^{-1} \right)\left(\boldsymbol{\Sigma}_{X'}+\boldsymbol{\Sigma}_{X^*}+\boldsymbol{\Sigma}_{\tilde{W}}\right)\nonumber\\
& = & \mathbf{0}.
\end{eqnarray}
\end{proof}

The more general problem, originally proved in \cite{Extremal:Liu}, is now considered in Theorem \ref{thm4}.
\begin{theorem}\label{thm4}
For an arbitrary random vector $X$ with a covariance matrix $\boldsymbol{\Sigma}_X$, two independent random vectors $W_G$ and $V_G$ with covariance matrices $\boldsymbol{\Sigma}_W$ and $\boldsymbol{\Sigma}_V$, respectively, and a positive semi-definite matrix $\mathbf{R}$, there exists a Gaussian random vector $X^*_G$ with a covariance matrix $\boldsymbol{\Sigma}_{X^*}$ which satisfies the following inequality:
\begin{eqnarray}\label{thm4e1_1}
    h(X+W_G) - \mu h(X+V_G) \leq h(X^*_G+W_G) - \mu h(X^*_G+V_G),
\end{eqnarray}
where the  constant $\mu \geq 1$, all random vectors are independent of each other, $\boldsymbol{\Sigma}_W$ is a positive definite matrix, $\boldsymbol{\Sigma}_X \preceq \mathbf{R}$, $\boldsymbol{\Sigma}_{X^*} \preceq \mathbf{R}$.
\end{theorem}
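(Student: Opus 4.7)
The plan is to imitate the proof of Theorem \ref{thm3} step by step, now decomposing the inner noise $V_G$ on the negative side instead of $W_G$, and then reducing the resulting intermediate expression back to the setting of Theorem \ref{thm3} via the substitution $Z = X + W_G$.

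First, I would split $V_G = \tilde{V}_G + V'_G$ into two independent Gaussian components with $\boldsymbol{\Sigma}_V = \boldsymbol{\Sigma}_{\tilde{V}} + \boldsymbol{\Sigma}_{V'}$, choosing the covariance $\boldsymbol{\Sigma}_{\tilde{V}}$ so that $\boldsymbol{\Sigma}_{\tilde{V}} \succeq \boldsymbol{\Sigma}_W$. This allows the further Gaussian decomposition $\tilde{V}_G = W_G + \bar{V}_G$, with $\bar{V}_G$ independent of $W_G$ and $\boldsymbol{\Sigma}_{\bar{V}} = \boldsymbol{\Sigma}_{\tilde{V}} - \boldsymbol{\Sigma}_W$. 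Applying the worst-additive-noise lemma (Lemma \ref{lem2}) to the representation $X + V_G = X + \tilde{V}_G + V'_G$ yields
\begin{eqnarray*}
h(X+W_G) - \mu h(X+V_G) & \leq & h(X+W_G) - \mu h(X+\tilde{V}_G) \\
& & {} + \mu \left[ h(X_G+\tilde{V}_G) - h(X_G+V_G) \right].
\end{eqnarray*}

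Setting $Z = X + W_G$ and noting that $X + \tilde{V}_G = Z + \bar{V}_G$, the leading pair becomes $h(Z) - \mu h(Z + \bar{V}_G)$, which is precisely the object handled by Theorem \ref{thm3} (with signal $Z$, noise $\bar{V}_G$, and covariance envelope $\mathbf{R} + \boldsymbol{\Sigma}_W$, since $\boldsymbol{\Sigma}_Z \preceq \mathbf{R} + \boldsymbol{\Sigma}_W$). Theorem \ref{thm3} then delivers a Gaussian $Z^*_G$ upper bounding the left-hand side. The key is to tune $\boldsymbol{\Sigma}_{\tilde{V}}$ so that the $Z^*_G$ it produces satisfies $\boldsymbol{\Sigma}_W \preceq \boldsymbol{\Sigma}_{Z^*} \preceq \mathbf{R} + \boldsymbol{\Sigma}_W$; this is exactly what guarantees the existence of a Gaussian $X^*_G$ independent of $W_G$, with $Z^*_G$ equal in distribution to $X^*_G + W_G$ and with covariance $\boldsymbol{\Sigma}_{X^*} = \boldsymbol{\Sigma}_{Z^*} - \boldsymbol{\Sigma}_W \preceq \mathbf{R}$.

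It then remains to absorb the residual correction $\mu[h(X_G+\tilde{V}_G) - h(X_G+V_G)]$. Following Theorem \ref{thm3}, I would split $X^*_G$ into two independent Gaussian pieces and set up the Markov chain $X'_G \to X'_G + X^*_G + \tilde{V}_G + V'_G \to X'_G + X^*_G + \tilde{V}_G$, verifying conditional independence through the moment-generating-function criterion of Lemmas \ref{lem4}--\ref{lem6}. The equality case of the data-processing inequality (Lemma \ref{lem3}) then collapses the bound to $h(X^*_G + W_G) - \mu h(X^*_G + V_G)$, which is the desired inequality.

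The main obstacle is the simultaneous choice of $\boldsymbol{\Sigma}_{\tilde{V}}$: it must (i) lie in the matrix interval between $\boldsymbol{\Sigma}_W$ and $\boldsymbol{\Sigma}_V$ so the two decompositions make sense, (ii) yield $\boldsymbol{\Sigma}_{X^*} \preceq \mathbf{R}$, and (iii) satisfy the commutativity relations that force the above Markov chain to hold. This is the two-noise analogue of Lemma \ref{lem7}, and I expect its proof to rest on a Riccati-type matrix identity of the form $\boldsymbol{\Sigma}_{\tilde{V}} = ((\boldsymbol{\Sigma}_X + \boldsymbol{\Sigma}_V)^{-1} + \mathbf{L})^{-1} - \boldsymbol{\Sigma}_X$ with a carefully selected positive semidefinite $\mathbf{L}$, together with a short case analysis when $\boldsymbol{\Sigma}_V$ does not majorize $\boldsymbol{\Sigma}_W$ (in which case the roles of the two noises, or a preliminary enhancement/regularization of $\boldsymbol{\Sigma}_V$, need to be exchanged).
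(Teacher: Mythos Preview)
Your plan diverges from the paper's in one structural choice that turns out to matter. The paper decomposes the \emph{positive}-side noise $W_G=\tilde W_G+W'_G$, applies the data processing inequality (not the worst additive noise lemma) to get
\[
h(X+W_G)-\mu h(X+V_G)\le h(X+\tilde W_G)-\mu h(X+V_G)+h(W_G)-h(\tilde W_G),
\]
then invokes Theorem~\ref{thm3} on the pair $(X+\tilde W_G,\,V_G)$ with $V_G=\tilde W_G+\tilde V_G+V'_G$, and finally closes the gap by forcing the reverse Markov chain $X_G^*\to X_G^*+W_G\to X_G^*+\tilde W_G$ via a Lemma~\ref{lem7}-type identity $\boldsymbol{\Sigma}_{\tilde W}=(\boldsymbol{\Sigma}_W^{-1}+\mathbf K)^{-1}$ with $\mathbf K\boldsymbol{\Sigma}_{X^*}=\mathbf 0$ (Lemma~\ref{lem8}). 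The point is that $\tilde W_G$ is being \emph{shrunk}: one only needs $\boldsymbol{\Sigma}_{\tilde W}\preceq\boldsymbol{\Sigma}_W$ and $\boldsymbol{\Sigma}_{\tilde W}\preceq\boldsymbol{\Sigma}_V$, and small enough positive-definite matrices always sit below any prescribed pair, so no ordering hypothesis between $\boldsymbol{\Sigma}_W$ and $\boldsymbol{\Sigma}_V$ is ever needed.

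Your route, by contrast, asks for $\boldsymbol{\Sigma}_{\tilde V}$ in the matrix interval $[\boldsymbol{\Sigma}_W,\boldsymbol{\Sigma}_V]$, and simultaneously for $\boldsymbol{\Sigma}_{Z^*}\succeq\boldsymbol{\Sigma}_W$ so that $\boldsymbol{\Sigma}_{X^*}=\boldsymbol{\Sigma}_{Z^*}-\boldsymbol{\Sigma}_W\succeq\mathbf 0$. The first of these is already empty whenever $\boldsymbol{\Sigma}_V\not\succeq\boldsymbol{\Sigma}_W$, which the theorem does not exclude (only $\boldsymbol{\Sigma}_W\succ\mathbf 0$ is assumed; $\boldsymbol{\Sigma}_V$ can be arbitrary, even singular). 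Your proposed fallback --- swap the roles of $W_G$ and $V_G$, or ``enhance'' $\boldsymbol{\Sigma}_V$ --- does not rescue this: the functional is not symmetric in the two noises (coefficients $1$ versus $\mu$), and a preliminary enhancement of $\boldsymbol{\Sigma}_V$ is exactly the channel-enhancement step the whole argument is meant to avoid. So as written the plan has a genuine gap in the non-degraded regime. If you insist on decomposing on the $V$-side, you would still need to first shrink $W_G$ to some $\tilde W_G$ with $\boldsymbol{\Sigma}_{\tilde W}\preceq\boldsymbol{\Sigma}_V$ before your substitution $Z=X+\tilde W_G$ makes sense --- at which point you are essentially back to the paper's decomposition, with Lemma~\ref{lem8} doing the work your Riccati-type ansatz was aiming at.
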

\begin{proof}
Due to the same reason mentioned in the proof of Theorem \ref{thm3}, without loss of generality, we assume $\mu>1$ and $\mathbf{R}$ is a positive definite matrix. The proof is generally similar to the proof of Theorem \ref{thm3}. Using Lemma \ref{lem3}, the inequality (\ref{thm4e1_1}) can be expressed as
\begin{eqnarray}
\label{thm4e2_1} h(X+W_G) - \mu h(X+V_G) & \leq & h(X+\tilde{W}_G) - \mu h(X+V_G) + h(W_G) - h(\tilde{W}_G)\\
\label{thm4e2_2} & \leq & h(X_G^*+\tilde{W}_G) - \mu h(X_G^*+V_G) + h(W_G) - h(\tilde{W}_G)\\
\label{thm4e2_3} & = & h(X_G^*+W_G) - \mu h(X_G^*+V_G),
\end{eqnarray}
where $\tilde{W}_G$ is chosen to be a Gaussian random vector whose  covariance matrix, $\boldsymbol{\Sigma}_{\tilde{W}}$, satisfies
\begin{eqnarray}
\label{thm4e3_1} \boldsymbol{\Sigma}_{\tilde{W}} & \preceq & \boldsymbol{\Sigma}_{W},\\
\label{thm4e3_2} \boldsymbol{\Sigma}_{\tilde{W}} & \preceq & \left(\mu-1\right)^{-1}\boldsymbol{\Sigma}_{\tilde{V}},
\end{eqnarray}
where $\boldsymbol{\Sigma}_{\tilde{V}}$ is the covariance matrix of the  Gaussian random vector $\tilde{V}_G$, $V'_G$ is a Gaussian random vector with  covariance matrix $\boldsymbol{\Sigma}_{V'}$,  $V_G = \tilde{W}_G + \tilde{V}_G + V'_G$, and $\tilde{W}_G$, $\tilde{V}_G$, and $V'_G$ are independent of one another.

The inequality in (\ref{thm4e2_1}) is due to Lemma \ref{lem3}, the inequality (\ref{thm4e2_2}) is due to Theorem \ref{thm3}, and the equality (\ref{thm4e2_3}) will be proved using the equality condition in Lemma \ref{lem3}. We will also prove that there exists a Gaussian random vector $\tilde{W}_G$ which satisfies the equations (\ref{thm4e3_1}) and (\ref{thm4e3_2}) by proving later Lemma \ref{lem8}.

To satisfy the equality in the equation (\ref{thm4e2_3}), the equality condition in Lemma \ref{lem3} must be satisfied, and the following two Markov chains are  formed:
\begin{enumerate}
\item \begin{eqnarray}\label{thm4e4_1}
X_G^* \rightarrow X_G^*+\tilde{W}_G \rightarrow X_G^*+\tilde{W}_G+W'_G,
\end{eqnarray}
\item \begin{eqnarray}\label{thm4e4_2}
X_G^* \rightarrow X_G^*+\tilde{W}_G+W'_G \rightarrow X_G^*+\tilde{W}_G,
\end{eqnarray}
\end{enumerate}
where all random vectors are normally distributed, $\tilde{W}_G$ and $W'_G$ are independent of each other, $W_G=\tilde{W}_G+W'_G$, and $X_G^*$ is independent of other random vectors.

The Markov chain (\ref{thm4e4_1}) is naturally formed since $X_G^*$, $\tilde{W}_G$, and $W'_G$ are independent Gaussian random vectors. The validity of the Markov chain (\ref{thm4e4_2}) is proved using the concept of moment generating function. In the Markov chain (\ref{thm4e4_2}), since all random vectors are Gaussian (without loss of generality, they are assumed to have zero means), using Lemma \ref{lem6}, the following moment generating functions are expressed in closed-form:
\begin{eqnarray}
\label{thm4e5_1} M_{Y_1|Y_3}(S) & = & \exp \left\{S^T \boldsymbol{\Sigma}_{Y_1} \boldsymbol{\Sigma}_{Y_3}^{-1} Y_3 + \frac{1}{2} S^T \left(\boldsymbol{\Sigma}_{Y_1}-\boldsymbol{\Sigma}_{Y_1} \boldsymbol{\Sigma}_{Y_3}^{-1} \boldsymbol{\Sigma}_{Y_1}\right)S\right\},\nonumber\\
\label{thm4e5_2} M_{Y_2|Y_3}(S) & = & \exp \left\{S^T \boldsymbol{\Sigma}_{Y_2} \boldsymbol{\Sigma}_{Y_3}^{-1} Y_3 + \frac{1}{2} S^T \left(\boldsymbol{\Sigma}_{Y_2}-\boldsymbol{\Sigma}_{Y_2} \boldsymbol{\Sigma}_{Y_3}^{-1} \boldsymbol{\Sigma}_{Y_2}\right)S\right\},
\end{eqnarray}
where $Y_1=X_G^*$, $Y_2=X_G^*+\tilde{W}_G$, $Y_3=X_G^*+\tilde{W}_G+W'_G$, and their covariance matrices are represented by $\boldsymbol{\Sigma}_{Y_1}$, $\boldsymbol{\Sigma}_{Y_2}$, and $\boldsymbol{\Sigma}_{Y_3}$, respectively. Since $\boldsymbol{\Sigma}_{W}$ is a positive definite matrix, there always exists the inverse of $\boldsymbol{\Sigma}_{Y_3}$.

On the other hand, the MGF of $Y_1+Y_2$ given $Y_3$ is represented as
\begin{eqnarray}
&   & M_{Y_1+Y_2|Y_3}(S)\nonumber\\
\label{thm4e6_1} & = & \exp \left\{S^T \left(\boldsymbol{\Sigma}_{Y_1}+\boldsymbol{\Sigma}_{Y_2}\right) \boldsymbol{\Sigma}_{Y_3}^{-1} Y_3 + \frac{1}{2} S^T \left(\boldsymbol{\Sigma}_{Y_1}-\boldsymbol{\Sigma}_{Y_1} \boldsymbol{\Sigma}_{Y_3}^{-1} \boldsymbol{\Sigma}_{Y_1}+\boldsymbol{\Sigma}_{Y_2}-\boldsymbol{\Sigma}_{Y_2} \boldsymbol{\Sigma}_{Y_3}^{-1} \boldsymbol{\Sigma}_{Y_2}\right)S\right\}\nonumber\\
&   & \times \exp \left\{S^T \left(\boldsymbol{\Sigma}_{Y_1} - \boldsymbol{\Sigma}_{Y_2} \boldsymbol{\Sigma}_{Y_3}^{-1} \boldsymbol{\Sigma}_{Y_1} + \boldsymbol{\Sigma}_{Y_1} - \boldsymbol{\Sigma}_{Y_1} \boldsymbol{\Sigma}_{Y_3}^{-1} \boldsymbol{\Sigma}_{Y_2}\right)S\right\}\nonumber\\
\label{thm4e6_2} & = & M_{Y_1|Y_3}(S) M_{Y_2|Y_3}(S)\exp \underbrace{\left\{S^T \left(\boldsymbol{\Sigma}_{Y_1} - \boldsymbol{\Sigma}_{Y_2} \boldsymbol{\Sigma}_{Y_3}^{-1} \boldsymbol{\Sigma}_{Y_1} + \boldsymbol{\Sigma}_{Y_1} - \boldsymbol{\Sigma}_{Y_1} \boldsymbol{\Sigma}_{Y_3}^{-1} \boldsymbol{\Sigma}_{Y_2}\right)S\right\}}_{(B)}.
\end{eqnarray}
If the factor $(B)$ in (\ref{thm4e6_2})  vanishes, $Y_1$ and $Y_2$ are independent given $Y_3$, and the Markov chain (\ref{thm4e4_2}) is obtained. Using Lemma 11, (1) in \cite{CapBroad:Shamai}, we define a covariance matrix $\boldsymbol{\Sigma}_{\tilde{W}}$ as follows:
\begin{eqnarray}\label{thm4e7_1}
\boldsymbol{\Sigma}_{\tilde{W}} & = & \left(\boldsymbol{\Sigma}_{W}^{-1}+\mathbf{K}\right)^{-1},
\end{eqnarray}
where $\mathbf{K} \succeq 0$, $\mathbf{K} \boldsymbol{\Sigma}_{X^*} = \boldsymbol{\Sigma}_{X^*}\mathbf{K} =\mathbf{0}$, and $\mathbf{0}$ denotes an $n$-by-$n$ zero matrix. Then, there exists a positive semi-definite matrix $\mathbf{K}$ which satisfies
\begin{eqnarray}\label{thm4e8_1}
\boldsymbol{\Sigma}_{\tilde{W}} & \preceq & \left(\mu-1\right)^{-1} \boldsymbol{\Sigma}_{\tilde{V}},\\
\mathbf{K} \boldsymbol{\Sigma}_{X^*} & = & \boldsymbol{\Sigma}_{X^*} \mathbf{K}   = \mathbf{0},\\
\end{eqnarray}
where $\boldsymbol{\Sigma}_{X^*}=(\mu-1)^{-1}\boldsymbol{\Sigma}_{\tilde{V}} - \boldsymbol{\Sigma}_{\tilde{W}}$, and $\boldsymbol{\Sigma}_{\tilde{V}}$ is a positive semi-definite matrix, which satisfies the following condition: $\boldsymbol{\Sigma}_{\tilde{V}} \preceq \boldsymbol{\Sigma}_{V}$. The existence of matrix $\mathbf{K}$
  is proved by the following lemma.
\begin{lemma}\label{lem8}
There always exists a positive semi-definite matrix $\mathbf{K}$ which satisfies
\begin{eqnarray}
\label{lem8e1_1} \boldsymbol{\Sigma}_{\tilde{W}} & \preceq & \left(\mu-1\right)^{-1} \boldsymbol{\Sigma}_{\tilde{V}},\\
\label{lem8e1_2} \mathbf{K}\boldsymbol{\Sigma}_{X^*} & = & \boldsymbol{\Sigma}_{X^*} \mathbf{K} = \mathbf{0},
\end{eqnarray}
where $\boldsymbol{\Sigma}_{X^*}=(\mu-1)^{-1}\boldsymbol{\Sigma}_{\tilde{V}} - \boldsymbol{\Sigma}_{\tilde{W}}$, and $\boldsymbol{\Sigma}_{\tilde{W}} = \left(\boldsymbol{\Sigma}_{W}^{-1}+\mathbf{K}\right)^{-1}$.
\end{lemma}

Since $\boldsymbol{\Sigma}_{\tilde{W}}$ is defined as $\left(\boldsymbol{\Sigma}_{W}^{-1}+ \mathbf{K}\right)^{-1}$ in (\ref{thm4e7_1}), $\boldsymbol{\Sigma}_{\tilde{W}}$ satisfies
\begin{eqnarray}\label{thm4e9_1}
\left(\boldsymbol{\Sigma}_{X^*} +\boldsymbol{\Sigma}_{\tilde{W}}\right)^{-1} & = & \left(\boldsymbol{\Sigma}_{X^*} +\boldsymbol{\Sigma}_{W}\right)^{-1} + \mathbf{K},
\end{eqnarray}
based on Lemma 11, (1) in \cite{CapBroad:Shamai}.

Since $\mathbf{K} \boldsymbol{\Sigma}_{X^*} = \boldsymbol{\Sigma}_{X^*} \mathbf{K} = \mathbf{0}$, multiplying with  $\boldsymbol{\Sigma}_{X^*}$ both sides of the equation (\ref{thm4e9_1}), it follows that:
\begin{eqnarray}
\label{thm4e10_1} \left(\boldsymbol{\Sigma}_{X^*} +\boldsymbol{\Sigma}_{\tilde{W}}\right)^{-1}\boldsymbol{\Sigma}_{X^*} & = & \left(\boldsymbol{\Sigma}_{X^*} +\boldsymbol{\Sigma}_{W}\right)^{-1}\boldsymbol{\Sigma}_{X^*} + \mathbf{K}\boldsymbol{\Sigma}_{X^*}\nonumber\\
\label{thm4e10_2} & = & \left(\boldsymbol{\Sigma}_{X^*} +\boldsymbol{\Sigma}_{W}\right)^{-1}\boldsymbol{\Sigma}_{X^*},
\end{eqnarray}
and
\begin{eqnarray}
\label{thm4e10_3}\boldsymbol{\Sigma}_{X^*} \left(\boldsymbol{\Sigma}_{X^*} +\boldsymbol{\Sigma}_{\tilde{W}}\right)^{-1} & = & \boldsymbol{\Sigma}_{X^*} \left(\boldsymbol{\Sigma}_{X^*} +\boldsymbol{\Sigma}_{W}\right)^{-1} + \boldsymbol{\Sigma}_{X^*}\mathbf{K}\nonumber\\
\label{thm4e10_4} & = & \boldsymbol{\Sigma}_{X^*}\left(\boldsymbol{\Sigma}_{X^*} +\boldsymbol{\Sigma}_{W}\right)^{-1}.
\end{eqnarray}

Random vectors $Y_1$, $Y_2$, and $Y_3$ are defined as $Y_1=X_G^*$, $Y_2=X_G^*+\tilde{W}_G$, and $Y_3=X_G^*+\tilde{W}_G+W'_G$, respectively, and $X_G^*$, $\tilde{W}_G$, and $W'_G$ are independent of each other. Therefore, their covariance matrices are represented as
\begin{eqnarray}
\label{thm4e11_1} \boldsymbol{\Sigma}_{Y_1} & = & \boldsymbol{\Sigma}_{X^*},\nonumber\\
\label{thm4e11_2} \boldsymbol{\Sigma}_{Y_2} & = & \boldsymbol{\Sigma}_{X^*}+\boldsymbol{\Sigma}_{\tilde{W}},\nonumber\\
\label{thm4e11_3} \boldsymbol{\Sigma}_{Y_3} & = & \boldsymbol{\Sigma}_{X^*}+\boldsymbol{\Sigma}_{\tilde{W}}+\boldsymbol{\Sigma}_{W'}\nonumber\\
& = & \boldsymbol{\Sigma}_{X^*}+\boldsymbol{\Sigma}_{W}.
\end{eqnarray}
From  (\ref{thm4e10_2}) and (\ref{thm4e11_3}),  one can infer that
\begin{eqnarray}
\label{thm4e12_1} \boldsymbol{\Sigma}_{Y_1} - \boldsymbol{\Sigma}_{Y_2} \boldsymbol{\Sigma}_{Y_3}^{-1} \boldsymbol{\Sigma}_{Y_1} & = & \boldsymbol{\Sigma}_{X^*} - \left(\boldsymbol{\Sigma}_{X^*}+\boldsymbol{\Sigma}_{\tilde{W}}\right) \left(\boldsymbol{\Sigma}_{X^*}+\boldsymbol{\Sigma}_{W}\right)^{-1}\boldsymbol{\Sigma}_{X^*}\nonumber\\
& = & \left(\boldsymbol{\Sigma}_{X^*}+\boldsymbol{\Sigma}_{\tilde{W}}\right) \left(\left(\boldsymbol{\Sigma}_{X^*}+\boldsymbol{\Sigma}_{\tilde{W}}\right)^{-1}\boldsymbol{\Sigma}_{X^*} -  \left(\boldsymbol{\Sigma}_{X^*}+\boldsymbol{\Sigma}_{W}\right)^{-1}\boldsymbol{\Sigma}_{X^*}\right)\nonumber\\
& = & \mathbf{0},
\end{eqnarray}
and from  (\ref{thm4e10_4}) and (\ref{thm4e11_3}), it follows similarly that
\begin{eqnarray}
\label{thm4e12_2} \boldsymbol{\Sigma}_{Y_1} - \boldsymbol{\Sigma}_{Y_1} \boldsymbol{\Sigma}_{Y_3}^{-1} \boldsymbol{\Sigma}_{Y_2} & = & \boldsymbol{\Sigma}_{X^*} - \boldsymbol{\Sigma}_{X^*}\left(\boldsymbol{\Sigma}_{X^*}+\boldsymbol{\Sigma}_{W}\right)^{-1} \left(\boldsymbol{\Sigma}_{X^*}+\boldsymbol{\Sigma}_{\tilde{W}}\right)\nonumber\\
& = & \left(\boldsymbol{\Sigma}_{X^*}\left(\boldsymbol{\Sigma}_{X^*}+\boldsymbol{\Sigma}_{\tilde{W}}\right)^{-1} - \boldsymbol{\Sigma}_{X^*}\left(\boldsymbol{\Sigma}_{X^*}+\boldsymbol{\Sigma}_{W}\right)^{-1} \right)\left(\boldsymbol{\Sigma}_{X^*}+\boldsymbol{\Sigma}_{\tilde{W}}\right)\nonumber\\
& = & \mathbf{0}.
\end{eqnarray}
Since the inverse matrix of $\boldsymbol{\Sigma}_{\tilde{W}}$ exists, $(\boldsymbol{\Sigma}_{X^*}+\boldsymbol{\Sigma}_{\tilde{W}})^{-1}$ also exists.

Therefore, (B) in the equation (\ref{thm4e6_2}) is zero, and $M_{Y_1+Y_2|Y_3}(S)=M_{Y_1|Y_3}(S) M_{Y_2|Y_3}(S)$. It means $Y_1$ and $Y_3$ are independent given $Y_2$, i.e.,  $X_G^*$ and $X_G^*+\tilde{W}_G$ are independent given $X_G^*+\tilde{W}_G+W'_G$, and the Markov chain (\ref{thm4e4_2}) is valid. The equality in the equation (\ref{thm4e2_3}) is achieved by the above procedure, and the proof is completed.
\end{proof}

\section{Applications of the Proposed Results}\label{sec4}

The versatility  of the extremal entropy inequality was already illustrated  by means of  several applications in \cite{Extremal:Liu}. However, the original proofs of the extremal entropy inequality in \cite{Extremal:Liu} were based on the channel enhancement technique while one of those applications, the capacity of the vector Gaussian broadcast channel, had been already proved by the channel enhancement technique in \cite{CapBroad:Shamai}. Even though the EEI was adapted to prove the capacity of the vector Gaussian broadcast channel in \cite{Extremal:Liu}, it failed to show a novel perspective since the proof of the EEI relied on the channel enhancement technique  \cite{CapBroad:Shamai}. On the other hand, based on our proof, the extremal inequality shows not only its  usefulness  but also a novel perspective to prove the capacity of the vector Gaussian broadcast channel.

To illustrate the usefulness of the proposed mathematical framework for proving the EEI,  this section proposes three  additional applications for the mathematical results presented in Section \ref{sec3}.  First, an alternative much  simplified approach for proving the EEI is provided. Second, finding the optimal solution of a   broadcasting channel with a private message and characterizing the MMSE performance of a linear Bayesian estimator for a Gaussian source embedded in additive noise are presented as additional applications of the proposed results.

\subsection{Another Novel Simplified Approach for Establishing the EEI}

Based on the proof  presented in previous section, one can come up with another more simplified proof for the EEI. This method relies partly on  calculus of variations techniques and partly on the results established in the previous section. However, this novel framework  is very general and can be further adapted to proving many other information theoretic inequalities. In this regard, a companion paper  was submitted for publication  \cite{erchin}. The proposed simplified proof of the EEI runs as follows.

First, select a Gaussian random vector $\tilde{{W}}_{G}$ whose covariance matrix $\boldsymbol{\Sigma}_{ \tilde{W}}$ satisfies $\boldsymbol{\Sigma}_{\tilde{W}} \preceq \boldsymbol{\Sigma}_{W}$ and $\boldsymbol{\Sigma}_{\tilde{W}}\preceq \boldsymbol{\Sigma}_{V}$. Since the Gaussian random vectors ${V}_{G}$ and ${W}_{G}$ can be represented as the sum of two independent random vectors $\tilde{{W}}_{ G}$ and $\hat{{V}}_{G}$, and as the sum of  two independent random vectors $\tilde{{W}}_{G}$ and $\hat{{W}}_{G}$, respectively, the LHS of the equation (\ref{thm4e1_1}) is expressed  as follows:
\begin{eqnarray}
\label{EI_eq51_1}
&& h({X}+{W}_{G}) - \mu h({X} + {V}_{G})  \nonumber\\
& \leq &  h({X}+\tilde{{W}}_{G}) - \mu h({X}+{V}_{G})  + h({W}_{G}) - h(\tilde{{W}}_{G})\nonumber\\
\label{EI_eq51_2}
& = & h({X}+\tilde{{W}}_{G})  - \mu h({X}+\tilde{{W}}_{G}+\hat{{V}}_{G}) + h(\tilde{{W}}_{G}+\hat{{W}}_{G}) - h(\tilde{{W}}_{G}) .
\end{eqnarray}

Since the equation will be maximized over $f_{ X}(\mathbf{x})$, the last two terms in (\ref{EI_eq51_2}) are ignored,  and  by defining the new random vectors ${Y}$ and $\hat{{X}}$ as ${X}+\tilde{{W}}_{G}+\hat{{V}}_{ G}$ and ${X}+\tilde{{W}}_{ G}$, respectively, the inequality in (\ref{thm4e1_1}) is equivalently expressed as the following variational problem:
\begin{eqnarray}
\label{EI_eq52_1}
&&\max_{f_{ \hat{X}}, f_{ Y}} \quad  h(\hat{{X}}) - \mu h({Y}) + \mu\left(\mu-1\right)  h(\hat{{V}}_{ G})\\
&&\text{s. t. }\quad \int\int f_{ \hat{X}}(\mathbf{x}) f_{ \hat{V}}(\mathbf{y}-\mathbf{x})d\mathbf{x}d\mathbf{y}-1 = 0, \nonumber\\
&&\hspace{11mm}\int\int f_{ \hat{X}}(\mathbf{x}) f_{ \hat{V}}(\mathbf{y}-\mathbf{x}) \mathbf{x}\mathbf{x}^{ T} d\mathbf{x}d\mathbf{y} -\boldsymbol{\Sigma}_{ \hat{X}} \preceq \mathbf{0},\nonumber\\
&&\hspace{11mm}\int\int f_{ \hat{X}}(\mathbf{x}) f_{ \hat{V}}(\mathbf{y}-\mathbf{x}) \mathbf{y}\mathbf{y}^{ T} d\mathbf{x}d\mathbf{y} -{\boldsymbol{\Sigma}}_{ Y^*}= \mathbf{0},\nonumber\\
&&\hspace{11mm}\int\int f_{ \hat{X}}(\mathbf{x}) f_{ \hat{V}}(\mathbf{y}-\mathbf{x})\left( \mathbf{y}\mathbf{y}^{ T} - \mathbf{x}\mathbf{x}^{ T}  - \left(\mathbf{y}-\mathbf{x}\right)\left(\mathbf{y}-\mathbf{x}\right)^{ T} \right)d\mathbf{x}d\mathbf{y} = \mathbf{0},\nonumber\\
&&\hspace{11mm}-\int\int f_{ \hat{X}}(\mathbf{x}) f_{ \hat{V}}(\mathbf{y}-\mathbf{x}) \log  f_{ \hat{X}}(\mathbf{x})d\mathbf{x}d\mathbf{y}  = p_{ \hat{X}},\\
&&\hspace{11mm}f_{ Y}(\mathbf{y}) = \int f_{ \hat{X}}(\mathbf{x})f_{ \hat{V}}(\mathbf{y}-\mathbf{x}) d\mathbf{x},\nonumber
\end{eqnarray}
where $\mathbf{x}$ and $\mathbf{y}$ are vectors, $\hat{{X}}={X}+\tilde{{W}}_{ G}$, ${Y}=\hat{{X}}+\hat{{V}}_{ G}$, ${W}_{ G} = \tilde{{W}}_{ G} + \hat{{W}}_{ G}$, ${V}_{ G} = \tilde{{W}}_{ G} + \hat{{V}}_{ G}$, $\boldsymbol{\Sigma}_{ \hat{X}}=\boldsymbol{\Sigma} +\boldsymbol{\Sigma}_{ \tilde{W}} $, $\boldsymbol{\Sigma}_{ Y^*}= \boldsymbol{\Sigma}_{ X^*} + \boldsymbol{\Sigma}_{ V} $, and  $\boldsymbol{\Sigma}_{ X^*}$ is the covariance matrix of the optimal solution ${X}^*$.

Using Euler's equations, we can solve this variational problem, and the problem in (\ref{EI_eq52_1}) is maximized when both $\hat{X}$ and $Y$ are Gaussian random vectors  (as shown in Appendix \ref{appD}). The important thing to  remark here is that solving this variational problem requires only the calculus of variations, i.e., the proposed method does not require neither the classical EPI nor the worst additive noise lemma.

Next the following inequality is obtained:
\begin{eqnarray}
\label{EI_eq53_1}
&&  h({X}+\tilde{{W}}_{ G})  - \mu h({X}+\tilde{{W}}_{ G}+\hat{{V}}_{ G}) + h(\tilde{{W}}_{ G}+\hat{{W}}_{ G}) -h(\tilde{{W}}_{ G})\nonumber\\
&\leq&  h({X}_{ G}^*+\tilde{{W}}_{ G})  - \mu h({X}_{ G}^*+\tilde{{W}}_{ G}+\hat{{V}}_{ G}) + h(\tilde{{W}}_{ G} +\hat{{W}}_{ G}) -h(\tilde{{W}}_{ G}).
\end{eqnarray}
Based on Lemma \ref{lem8},  the RHS of the equation  (\ref{EI_eq53_1}) is expressed as
\begin{eqnarray}
\label{EI_eq54_1}
&&  h({X}_{ G}^*+\tilde{{W}}_{ G}) - \mu h({X}_{ G}^*+\tilde{{W}}_{ G}+\hat{{V}}_{ G})  + h(\tilde{{W}}_{ G}+\hat{{W}}_{ G}) - h(\tilde{{W}}_{ G})\nonumber\\
& = & h({X}_{ G}^*+ {W}_{ G}) - \mu h({X}_{ G}^*+\tilde{{W}}_{ G}+\hat{{V}}_{ G}) ,
\end{eqnarray}
and therefore, from the equations in (\ref{EI_eq51_1}), (\ref{EI_eq53_1}), and (\ref{EI_eq54_1}), we obtain the following EEI:
\begin{eqnarray}
\label{EI_eq57_1}
&& h({X}+{W}_{ G})- \mu h({X}+{V}_{ G})  \nonumber\\
& \leq & h({X}+\tilde{{W}}_{ G}) - \mu h({X}+{V}_{ G})  + h({W}_{ G}) - h(\tilde{{W}}_{ G})\nonumber\\
& = & h({X}+\tilde{{W}}_{ G}) - \mu h({X}+\tilde{{W}}_{ G}+\hat{{V}}_{ G}) + h(\tilde{{W}}_{ G}+\hat{{W}}_{ G}) - h(\tilde{{W}}_{ G})\nonumber\\
&\leq&  h({X}_{ G}^*+\tilde{{W}}_{ G}) - \mu h({X}_{ G}^*+\tilde{{W}}_{ G}+\hat{{V}}_{ G})  + h(\tilde{{W}}_{ G}+\hat{{W}}_{ G}) - h(\tilde{{W}}_{ G})\nonumber\\
& = & h({X}_{ G}^*+\tilde{{W}}_{ G})  - \mu h({X}_{ G}^*+\tilde{{W}}_{ G}+\hat{{V}}_{ G}) + h(\tilde{{W}}_{ G}+\hat{{W}}_{ G}) - h(\tilde{{W}}_{ G})\nonumber\\
& = & h({X}_{ G}^*+{W}_{ G}) - \mu h({X}_{ G}^*+{V}_{ G}) ,\nonumber
\end{eqnarray}
and the proof is completed.

\subsection{Broadcasting Channel with a Private Message}

 Consider the practical communication set-up depicted in Figure \ref{Fig1}, where a broadcasting channel with a private message is considered  from the perspective of the mean square-error (MSE) performance metric. The input-output relationship of this broadcast channel are governed by these equations:
\begin{eqnarray}
\label{new_app_eq1_1}
Y_1 & = & X + Z_{G_1},\nonumber\\
Y_2 & = & X + Z_{G_2},
\end{eqnarray}
where $Z_{G_1}$ and $Z_{G_2}$ are additive Gaussian noise vectors with zero means and covariance matrices $\boldsymbol{\Sigma}_{Z_{G_1}}$ and $\boldsymbol{\Sigma}_{Z_{G_2}}$, respectively. The covariance matrices:  $\boldsymbol{\Sigma}_{Z_{G_1}}$ and $\boldsymbol{\Sigma}_{Z_{G_2}}$  are assumed to be positive definite. Matrix   $\boldsymbol{\Sigma}_X$ denotes  the  covariance matrix of $X$, and $\mathbf{R}$ stands for a positive semi-definite matrix.  Random vectors  $X$, $Z_{G_1}$, and $Z_{G_2}$  are assumed independent of one another. Random vectors $Y_1$ and $Y_2$ denote the received signals at the receiver $1$ and the receiver $2$, respectively.

Assume that the message $X$ is expected to be decoded only at the receiver $1$, but the message $X$ can be decoded at both the receivers $1$ and $2$ if they receive the message $X$ and the MSEs are below a certain threshold $\mathrm{Tr}\{\mathbf{R}\}$, respectively. Therefore, the question here is whether or not we can find a random vector $X$ which guarantees the MSE at the receiver $1$ is below the threshold $\mathrm{Tr}\{\mathbf{R}\}$,  while the MSE at the receiver $2$ is above the threshold $\mathrm{Tr}\{\mathbf{R}\}$, i.e.,  the receiver $1$ can decode the message $X$, but the receiver $2$ cannot decode the message $X$. The notation $\mathrm{Tr}$ denotes the trace of a matrix. We are also interested in which distribution of $X$ is the most power efficient to maintain such a MSE performance at the two  receivers.

\begin{figure}[th]
\begin {center}
    \includegraphics[width=0.4\textwidth]{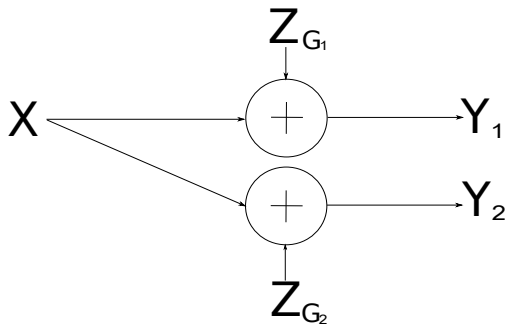}\\
  \caption{Gaussian broadcasting (wire-tap) channel}
\label{Fig1}
\end{center}
\end{figure}

To compare the MSE performance of the two receivers, we assume that both  receivers use  minimum mean-square error (MMSE) estimators. Since the minimum MSE estimator is optimal in the sense that it achieves the lowest MSE, this assumption is rational.

In summary, the goal of this problem is to find the optimal distribution $f_{X}(\mathbf{x})$ which satisfies the following problem:
\begin{eqnarray}
\label{new_app_eq1_3}
&&\min_{f_X(\mathbf{x})} \quad \boldsymbol{\Sigma}_{X} \\
&&\hspace{1mm} \mathrm{s.t.}\quad \quad \mathrm{Tr}\{\boldsymbol{\Sigma}_{X|Y_1}\} \leq \mathrm{Tr}\{\mathbf{R}\} \leq \mathrm{Tr}\{\boldsymbol{\Sigma}_{X|Y_2}\},\nonumber
\end{eqnarray}
where $\boldsymbol{\Sigma}_{X|Y_1}\hspace{-1mm}=\hspace{-1mm}\mathbb{E}\left[\left(X-\mathbb{E}\left[X|Y_1\right]\right)\left(X-\mathbb{E}\left[X|Y_1\right]\right)^T\right]$  and  $\boldsymbol{\Sigma}_{X|Y_2}\hspace{-1mm}=\hspace{-1mm}\mathbb{E}\left[\left(X-\mathbb{E}\left[X|Y_2\right]\right)\left(X-\mathbb{E}\left[X|Y_2\right]\right)^T\right]$.

The solution of the problem in (\ref{new_app_eq1_3}) can be obtained by the following procedure; first, define a new Gaussian random vector $\tilde{Z}_{G_1}$, which satisfies $\boldsymbol{\Sigma}_{\tilde{Z}_{G_1}} \preceq \boldsymbol{\Sigma}_{Z_{G_1}}$ and  $\boldsymbol{\Sigma}_{\tilde{Z}_{G_1}} \preceq \boldsymbol{\Sigma}_{Z_{G_2}}$, where $\boldsymbol{\Sigma}_{\tilde{Z}_{G_1}}$ stands for the  covariance matrix of $\tilde{Z}_{G_1}$. Second, find $\boldsymbol{\Sigma}_X$ which satisfies $\mathrm{Tr}\{\boldsymbol{\Sigma}_{X|Y_2}\} = \mathrm{Tr}\{\mathbf{R}\}$. Then, $\mathrm{Tr}\{\boldsymbol{\Sigma}_{X|\tilde{Y}_1}\} \leq \mathrm{Tr}\{\mathbf{R}\} = \mathrm{Tr}\{\boldsymbol{\Sigma}_{X|Y_2}\}$ since $\boldsymbol{\Sigma}_{X|\tilde{Y}_1} \preceq \boldsymbol{\Sigma}_{X|Y_2}$, where $\tilde{Y}_1=X+\tilde{Z}_{G_1}$, based on the data processing inequality for the covariance matrix \cite{EPI:Rioul}. Third, we will prove that there is a Gaussian $X_G^*$ with a covariance matrix $\boldsymbol{\Sigma}_{X_{G}^*}$, which satisfies $\boldsymbol{\Sigma}_{X_{G}^*|Y_2^*} = \boldsymbol{\Sigma}_{X|Y_2}$ and $\boldsymbol{\Sigma}_{X_{G}^*} \preceq \boldsymbol{\Sigma}_X$, where $Y_2^*=X_G^* + Z_{G_2}$. Finally, based on Lemma \ref{lem8}, we will show $\boldsymbol{\Sigma}_{X_{G}^*|Y_1^*}=\boldsymbol{\Sigma}_{X_{G}^*|\tilde{Y}_1^*}$, where $Y_1^*=X_G^*+Z_{G_1}$ and $\tilde{Y}_1^*=X_G^*+\tilde{Z}_{G_1}$. Since $\boldsymbol{\Sigma}_{X_G^*}$ is less than or equal to an arbitrary covariance matrix $\boldsymbol{\Sigma}_X$, which satisfies $\mathrm{Tr}\{\boldsymbol{\Sigma}_{X|Y_2}\} = \mathrm{Tr}\{\mathbf{R}\}$, the Gaussian random vector $X_G^*$ is the optimal solution (the details of the proof are deferred to  Appendix \ref{appC}).

Therefore, by choosing the message $X$ as a Gaussian random vector in (\ref{new_app_eq1_1}), we can securely transmit a private message, which is designed to arrive at the receiver $1$, in the most power efficient way.

\begin{rem}
This scenario can be interpreted as the secure transmission under a vector Gaussian wire-tap channel. In this case, the receiver $1$ is the legitimate receiver, and the receiver $2$ is the eavesdropper.
\end{rem}

\subsection{Bayesian Estimation of a Gaussian Source in Additive Noise Channel}

As shown in Figure \ref{Fig2}, the following additive noise channel is considered:
\begin{eqnarray}
\label{ANC_eq1_1}
Y = X_G + Z,
\end{eqnarray}
where $X_G$ is a Gaussian random vector with zero mean and covariance matrix $\boldsymbol{\Sigma}_{X}$, $Z$ denotes  an arbitrary random vector (noise) with zero mean and covariance matrix $\boldsymbol{\Sigma}_Z$, and $X_G$ and $Z$ are assumed independent of each other. We also assume that the covariance matrix of  additive noise $Z$ is upper-bounded, i.e., $\boldsymbol{\Sigma}_Z \preceq \mathbf{R}$, where $\mathbf{R}$ is a given positive semi-definite matrix.

\begin{figure}[th]
\begin {center}
    \includegraphics[width=0.4\textwidth]{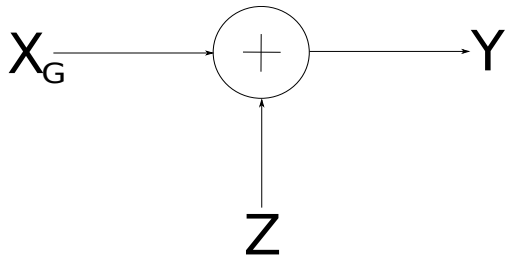}\\
  \caption{Additive Noise Channel}
\label{Fig2}
\end{center}
\end{figure}

 Using the channel model in (\ref{ANC_eq1_1}), we will next analyze the link between the channel input-output mutual information and the MMSE performance of a linear Bayesian estimator.  First, consider the following optimization problem:
\begin{eqnarray}
\label{ANC_eq2_1}
&& \min_{f_Z(\mathbf{Z})}\quad I(X_G; Y)\\
&&\mathrm{s.t.}\quad \quad \boldsymbol{\Sigma}_Z \preceq \mathbf{R}.\nonumber
\end{eqnarray}

The objective function in (\ref{ANC_eq2_1}) can be  expressed as
\begin{eqnarray}
\label{ANC_eq3_1}
I(X_G; Y) & = & I(X_G; X_G+Z)\nonumber\\
&=&h(X_G+Z) - h(Z),\nonumber
\end{eqnarray}
using the extremal inequality in Theorem \ref{thm3}, it follows  that the optimal solution of the optimization problem in (\ref{ANC_eq2_1}) is a multi-variate Gaussian density function, and the objective criterion can be expressed as
\begin{eqnarray}
\label{ANC_eq4_1}
I(X_G; X_G + Z_G^*) & = & h(X_G+Z_G^*) - h(Z_G^*)\nonumber\\
& = & -\frac{1}{2} \log \frac{\left|\boldsymbol{\Sigma}_X\right|}{\left|\boldsymbol{\Sigma}_Y\right|},
\end{eqnarray}
where $\boldsymbol{\Sigma}_{Y}$ is the covariance matrix of $Y$, $Y=X_G+Z_G^*$, and $Z_G^*$ is a Gaussian random vector with the covariance matrix $\mathbf{R}$. The right-hand side of  (\ref{ANC_eq4_1})  can be further expressed in terms of the MSE matrix of  the linear minimum MSE (LMMSE) estimator under the worst case scenario, i.e., the covariance matrix $\boldsymbol{\Sigma}_Z = \mathbf{R}$, as follows.  Given the channel (\ref{ANC_eq1_1}), the LMMSE estimator for $X$ takes the form:
\begin{eqnarray}
\hat{X} & = & \mathbb{E}\left[ X \right] + \boldsymbol{\Sigma}_X \left(\boldsymbol{\Sigma}_X +\mathbf{R}\right)^{-1} \left(Y - \mathbb{E}\left[X\right]\right)\nonumber\\
\label{ANC_eq5_1}
& = & \boldsymbol{\Sigma}_X \left(\boldsymbol{\Sigma}_X +\mathbf{R}\right)^{-1} Y,
\end{eqnarray}
where $\mathbb{E}[\cdot]$ denotes the expectation, and $\hat{X}$ stands for  the LMMSE estimator of $X$. The equality in (\ref{ANC_eq5_1}) is due to zero mean of $X$. Therefore, its MSE is expressed as
\begin{eqnarray}
\label{ANC_eq6_1}
\mathrm{LMMSE} & = & \boldsymbol{\Sigma}_X - \boldsymbol{\Sigma}_X \boldsymbol{\Sigma}_Y^{-1} \boldsymbol{\Sigma}_X \nonumber\\
& = &  \boldsymbol{\Sigma}_X  \boldsymbol{\Sigma}_Y^{-1}  \mathbf{R}.
\end{eqnarray}

Using  (\ref{ANC_eq6_1}), the equation in (\ref{ANC_eq4_1}) is expressed as
\begin{eqnarray}
\label{ANC_eq7_1}
I(X_G; X_G + Z_G^*) & = & -\frac{1}{2} \log \left|\mathrm{LMMSE}\right| + \frac{1}{2} \log \left|\mathbf{R}\right|,\nonumber\\
\end{eqnarray}
and it follows that
\begin{eqnarray}
\label{ANC_eq8_1}
I(X_G;X_G+Z) &\geq & I(X_G; X_G + Z_G^*)\nonumber\\
& = & -\frac{1}{2} \log \left|\mathrm{LMMSE}\right| + \frac{1}{2} \log \left|\mathbf{R}\right|.
\end{eqnarray}

Based on the equations in (\ref{ANC_eq8_1}), we can conclude the following facts.  First, when the additive noise is Gaussian, minimizing LMMSE is equivalent to maximizing the mutual information between the input and the output. Second, the worst case mutual information is expressed in terms of the LMMSE, while the mutual information in general is lower bounded by a function of the LMMSE. Finally, we observe  that the LMMSE estimator is, in general, sub-optimal since the mutual information between the input and the output is larger than the function of the LMMSE as shown in (\ref{ANC_eq8_1}).

\section{Conclusions}\label{sec5}
The main contributions of this paper are summarized as follows. In the first part of this paper, an alternative proof of the extremal entropy inequality is described in detail. The alternative proof is simpler, more direct, more explicit and more information-theoretic than the original proofs. The alternative proof is mainly based on the data processing inequality which enables  to by-pass the KKT conditions. Moreover, using properties of positive semi-definite matrices,  one can skip the step of  proving  the existence of the optimal solution which satisfies the KKT conditions, a step which is quite complicated to justify. This novel technique is based on a data processing inequality, and it is very unique and creative in respect that it presents a novel paradigm for lots of applications such as the capacity of the vector Gaussian broadcast channel and the secrecy capacity of the Gaussian wire-tap channel, which were proved commonly based on the channel enhancement technique \cite{Extremal:Liu}, \cite{CapBroad:Shamai}, \cite{SecrecyCap:Liu}, and \cite{SecCap:Ekrem}. Additional relevant applications in this regard include \cite{SecCap:Eldar}-\cite{sennur}. 
 In the second part of this paper, several additional important applications for the extremal entropy inequality are  presented. In this regard,  a second and even more simplified approach for establishing the extremal entropy inequality without using EPI or the worst data processing lemma is presented by exploiting the mathematical tools developed in the first part of this paper.   Two additional applications of the proposed mathematical results are presented  for the problem of determining the optimal solution for the broadcasting channel problem with a private message and in establishing  a mutual information-based performance bound for the mean square-error of a linear Bayesian estimator for a Gaussian source in an additive noise channel. One can observe that the last application presented can be sightly extended to non-Gaussian sources, a fact that suggests that the extremal entropy inequality (\ref{thm1e1_1}) might hold true even for non-Gaussian  $W_G$. However, establishing an extension of the EEI in this direction or other directions such as proving the EEI under a more general constraint (such as an upper and/or lower-bound constraint on the power spectral density of the random vector $X$ instead of the covariance matrix constraint) represent interesting open problems.  Finally, we would like to thank the reviewers for their constructive comments and bringing to our attention the reference
 \cite{nair} which presents a completely different approach for proving EEI. This  approach  relies on showing the optimality of Gaussian distribution  by exploiting  the factorization of concave envelopes. At the time of submitting our paper in August 2011, we were not aware of the parallel submission \cite{nair}.

\noindent

\appendices
\section{Proof of Lemma \ref{lem7}}\label{appA}

Proving $\boldsymbol{\Sigma}_{X^*} \preceq \boldsymbol{\Sigma}_X$ is equivalent to proving the following:
\begin{eqnarray}
\label{appendeq1_1} &    & \boldsymbol{\Sigma}_{X^*} \preceq \boldsymbol{\Sigma}_X\\
\label{appendeq1_2} &\Longleftrightarrow& \boldsymbol{\Sigma}_{\tilde{W}} \preceq \left(\mu-1\right)\boldsymbol{\Sigma}_X\\
\label{appendeq1_3} &\Longleftrightarrow& \left(\left(\boldsymbol{\Sigma}_{X}+\boldsymbol{\Sigma}_{W}\right)^{-1}+\mathbf{L}\right)^{-1}-\boldsymbol{\Sigma}_{X} \preceq \left(\mu-1\right)\boldsymbol{\Sigma}_X\\
\label{appendeq1_4} &\Longleftrightarrow& \left(\boldsymbol{\Sigma}_{X}+\boldsymbol{\Sigma}_{W}\right)^{-1}+\mathbf{L} \succeq \mu^{-1}\boldsymbol{\Sigma}_X^{-1}
\end{eqnarray}
Since there always exists a non-singular matrix which simultaneously diagonalizes two positive semi-definite matrices \cite{Matrix:Horn}, there exists a non-singular matrix $\mathbf{Q}$ which simultaneously diagonalizes both $\boldsymbol{\Sigma}_{X}$ and $\boldsymbol{\Sigma}_{W}$ as follows:
\begin{eqnarray}
\label{appendeq2_1}\mathbf{Q}^T \boldsymbol{\Sigma}_{X} \mathbf{Q} & = & \mathbf{I},\\
\label{appendeq2_2}\mathbf{Q}^T \boldsymbol{\Sigma}_{W} \mathbf{Q} & = & \mathbf{D}_W,\\
\end{eqnarray}
where $\mathbf{I}$ is an identity matrix, and $\mathbf{D}_W$ is a diagonal matrix. Since $\mathbf{Q}$ is a non-singular matrix, the inverse of $\mathbf{Q}$ always exists, and $\boldsymbol{\Sigma}_{X}$ and $\boldsymbol{\Sigma}_{W}$ are expressed as
\begin{eqnarray}
\label{appendeq3_1}\boldsymbol{\Sigma}_{X} & = & \mathbf{Q}^{-T}\mathbf{Q}^{-1},\\
\label{appendeq3_2}\boldsymbol{\Sigma}_{W} & = & \mathbf{Q}^{-T}\mathbf{D}_W \mathbf{Q}^{-1}.
\end{eqnarray}
If we define $\mathbf{D}_L$ as a diagonal matrix whose $i^{th}$ diagonal element is represented as $d_{L_i}$,  and which it is defined as
\begin{equation}
\label{appendeq4_1}d_{L_i} =  \left\{ \begin{array} {ll} 0 & \text{if }d_{W_i} \leq \mu-1 \\
\frac{d_{W_i}-\left(\mu-1\right)}{\mu\left(1+d_{W_i}\right)} & \text{if }d_{W_i} > \mu-1 \end{array} \right.
\end{equation}
where $d_{W_i}$ denotes the $i^{th}$ diagonal element of $\mathbf{D}_{W}$, and define $\mathbf{L}$ as
\begin{eqnarray}
\label{appendeq5_1}\mathbf{L} & = & \mathbf{Q} \mathbf{D}_L \mathbf{Q}^{T},
\end{eqnarray}
the equation (\ref{appendeq1_4}) is equivalent to
\begin{eqnarray}
\label{appendeq6_1}&&\left(\boldsymbol{\Sigma}_{X}+\boldsymbol{\Sigma}_{W}\right)^{-1}+\mathbf{L} \succeq \mu^{-1}\boldsymbol{\Sigma}_X^{-1}\\
\label{appendeq6_2}&\Longleftrightarrow& \left(\mathbf{Q}^{-T}\mathbf{Q}^{-1}+\mathbf{Q}^{-T}\mathbf{D}_W\mathbf{Q}^{-1}\right)^{-1}+\mathbf{Q}\mathbf{D}_L\mathbf{Q}^T \succeq \mu^{-1}\mathbf{Q}\mathbf{Q}^{T}\\
\label{appendeq6_3}&\Longleftrightarrow& \left(\mathbf{I}+\mathbf{D}_W\right)^{-1}+\mathbf{D}_L \succeq \mu^{-1}\mathbf{I}.
\end{eqnarray}
The equation (\ref{appendeq6_3}) always holds since $\mathbf{D}_L$ is defined as in (\ref{appendeq4_1}) and (\ref{appendeq5_1}) to satisfy (\ref{appendeq6_3}). Therefore, the inequality (\ref{appendeq1_1}) is also satisfied.

We know that $\boldsymbol{\Sigma}_{X'}$ is $\boldsymbol{\Sigma}_{X}-\boldsymbol{\Sigma}_{X^*}$. Since $\boldsymbol{\Sigma}_{X^*}=(\mu-1)^{-1}\boldsymbol{\Sigma}_{\tilde{W}}$, $\boldsymbol{\Sigma}_{X'}$ is expressed as $\boldsymbol{\Sigma}_{X}-(\mu-1)^{-1}\boldsymbol{\Sigma}_{\tilde{W}}$, and
\begin{eqnarray}
\label{appendeq7_1}
\boldsymbol{\Sigma}_{X'} \mathbf{L} & = & \left(\boldsymbol{\Sigma}_{X}-\left(\mu-1 \right)^{-1} \boldsymbol{\Sigma}_{\tilde{W}}\right) \mathbf{L},
\end{eqnarray}
and the equation (\ref{appendeq7_1}) is re-written as
\begin{eqnarray}
\label{appendeq8_1}\boldsymbol{\Sigma}_{X'} \mathbf{L}  \hspace{-3mm} & = &   \hspace{-3mm}\left(\boldsymbol{\Sigma}_{X}-\left(\mu-1 \right)^{-1} \boldsymbol{\Sigma}_{\tilde{W}}\right) \mathbf{L}\\
\label{appendeq8_2}
& = &  \hspace{-3mm} \left\{\mathbf{Q}^{-T}\mathbf{Q}^{-1}-\left(\mu-1 \right)^{-1} \left( \left(\left(\mathbf{Q}^{-T}\mathbf{Q}^{-1}+\mathbf{Q}^{-T}\mathbf{D}_{W}\mathbf{Q}^{-1}\right)^{-1}+\mathbf{Q} \mathbf{D}_L\mathbf{Q}^T\right)^{-1}- \mathbf{Q}^{-T}\mathbf{Q}^{-1}\right)\right\} \nonumber\\
&&\quad \quad \times \mathbf{Q}\mathbf{D}_L\mathbf{Q}^T\\
\label{appendeq8_3}& = &  \hspace{-3mm}\left(\mu-1 \right)^{-1} \mathbf{Q}^{-T}\left(\mu \mathbf{I}- \left(\left(\mathbf{I}+\mathbf{D}_{W}\right)^{-1} +  \mathbf{D}_L\right)^{-1}\right) \mathbf{D}_L\mathbf{Q}^T\\
\label{appendeq8_4}& = &  \hspace{-3mm} \mathbf{0}.
\end{eqnarray}
The equality (\ref{appendeq8_2}) is due to the equations (\ref{appendeq3_1}), (\ref{appendeq3_2}), and (\ref{appendeq5_1}), and the equality (\ref{appendeq8_4}) is due to (\ref{appendeq4_1}). Similarly,
\begin{eqnarray}
 \mathbf{L}  \boldsymbol{\Sigma}_{X'} \hspace{-3mm} & = &   \hspace{-3mm} \mathbf{L}\left(\boldsymbol{\Sigma}_{X}-\left(\mu-1 \right)^{-1} \boldsymbol{\Sigma}_{\tilde{W}}\right)\nonumber\\
& = & \hspace{-3mm} \mathbf{Q}\mathbf{D}_L\mathbf{Q}^T \nonumber\\
&  &  \hspace{-3mm} \times \left\{\mathbf{Q}^{-T}\mathbf{Q}^{-1}-\left(\mu-1 \right)^{-1} \left( \left(\left(\mathbf{Q}^{-T}\mathbf{Q}^{-1}+\mathbf{Q}^{-T}\mathbf{D}_{W}\mathbf{Q}^{-1}\right)^{-1}+\mathbf{Q} \mathbf{D}_L\mathbf{Q}^T\right)^{-1}- \mathbf{Q}^{-T}\mathbf{Q}^{-1}\right)\right\} \nonumber\\
& = &  \hspace{-3mm}\left(\mu-1 \right)^{-1} \mathbf{Q}\mathbf{D}_L \left(\mu \mathbf{I}- \left(\left(\mathbf{I}+\mathbf{D}_{W}\right)^{-1} +  \mathbf{D}_L\right)^{-1}\right)\mathbf{Q}^{-1} \nonumber\\
& = &  \hspace{-3mm} \mathbf{0}.\nonumber
\end{eqnarray}

Therefore, by defining $\boldsymbol{\Sigma}_{\tilde{W}}=((\boldsymbol{\Sigma}_{X}+\boldsymbol{\Sigma}_{W})^{-1}+\mathbf{L})^{-1}-\boldsymbol{\Sigma}_{X} $, we can make $\boldsymbol{\Sigma}_{\tilde{W}}$ satisfy
\begin{eqnarray}
\label{apendeq8_1}\boldsymbol{\Sigma}_{\tilde{W}} \preceq \left(\mu-1\right) \boldsymbol{\Sigma}_X,\hspace{5mm}
\label{apendeq8_2}\boldsymbol{\Sigma}_{X'} \mathbf{L} = \mathbf{L} \boldsymbol{\Sigma}_{X'}  = \mathbf{0},
\end{eqnarray}
and the proof is completed.
\begin{rem}
Since the optimization problem in \cite{Extremal:Liu} is generally nonconvex, the existence of optimal solution must be proved \cite{Extremal:Liu}, \cite{CapBroad:Shamai}, and this step is very complicated. However, in our proof, Lemmas \ref{lem7} and  \ref{lem8} serve as a substitute for this step  since we by-pass KKT-condition related parts using the data processing inequality. This makes the proposed proof much simpler.
\end{rem}

\section{Proof of Lemma \ref{lem8}}\label{appB}

Proving $\boldsymbol{\Sigma}_{\tilde{W}} \preceq \left(\mu-1\right)^{-1} \boldsymbol{\Sigma}_{\tilde{V}}$ is equivalent to proving the following:
\begin{eqnarray}
\label{apendeq1_1}&& \boldsymbol{\Sigma}_{\tilde{W}} \preceq \left(\mu-1\right)^{-1} \boldsymbol{\Sigma}_{\tilde{V}}\\
\label{apendeq1_2}&\Longleftrightarrow& \boldsymbol{\Sigma}_{W}^{-1} + \mathbf{K} \succeq \left(\mu-1\right) \boldsymbol{\Sigma}_{\tilde{V}}^{-1}  .
\end{eqnarray}
Since there always exists a non-singular matrix which simultaneously diagonalizes two positive semi-definite matrices \cite{Matrix:Horn}, there exists a non-singular matrix $\mathbf{Q}$ which simultaneously diagonalizes both $\boldsymbol{\Sigma}_{W}$ and $\boldsymbol{\Sigma}_{\tilde{V}}$ as follows:
\begin{eqnarray}
\label{apendeq2_1}\mathbf{Q}^T \boldsymbol{\Sigma}_{W} \mathbf{Q} & = & \mathbf{D}_{W},\\
\label{apendeq2_2}\mathbf{Q}^T \boldsymbol{\Sigma}_{\tilde{V}} \mathbf{Q} & = & \mathbf{I},
\end{eqnarray}
where $\mathbf{I}$ is an identity matrix, and $\mathbf{D}_W$ is a diagonal matrix. Since $\mathbf{Q}$ is a non-singular matrix, the inverse of $\mathbf{Q}$ always exists, and $\boldsymbol{\Sigma}_{W}$ and $\boldsymbol{\Sigma}_{\tilde{V}}$ are expressed as
\begin{eqnarray}
\label{apendeq3_1}\boldsymbol{\Sigma}_{W} & = & \mathbf{Q}^{-T}\mathbf{D}_W \mathbf{Q}^{-1},\\
\label{apendeq3_2}\boldsymbol{\Sigma}_{\tilde{V}} & = & \mathbf{Q}^{-T} \mathbf{Q}^{-1}.
\end{eqnarray}

If we define $\mathbf{D}_K$ as a diagonal matrix whose $i^{th}$ diagonal element is represented as $d_{K_i}$, and which it is defined as
\begin{equation}
\label{apendeq3_3}d_{K_i} =  \left\{ \begin{array} {ll} 0 & \text{if }d_{W_i} \leq \left(\mu-1\right)^{-1} \\
\mu-1- \frac{1}{d_{W_i}} & \text{if }d_{W_i} > \left(\mu-1\right)^{-1} \end{array} \right.
\end{equation}
where $d_{W_i}$ denotes the $i^{th}$ diagonal element of $\mathbf{D}_{W}$, and define $\mathbf{K}$ as
\begin{eqnarray}
\label{apendeq4_1}
\mathbf{K} & = & \mathbf{Q} \mathbf{D}_K \mathbf{Q}^{T},
\end{eqnarray}
then the equation (\ref{apendeq1_2}) is equivalent to
\begin{eqnarray}
\label{apendeq5_1}
&&\boldsymbol{\Sigma}_{W}^{-1} + \mathbf{K} \succeq \left(\mu-1\right) \boldsymbol{\Sigma}_{\tilde{V}}^{-1}\\
\label{apendeq5_2}
&\Longleftrightarrow& \left(\mathbf{Q}^{-T} \mathbf{D}_W \mathbf{Q}^{-1}\right)^{-1} + \mathbf{Q}\mathbf{D}_K \mathbf{Q}^{T} \succeq \left(\mu-1\right) \left(\mathbf{Q}^{-T} \mathbf{Q}^{-1}\right)^{-1}\\
\label{apendeq5_3}
&\Longleftrightarrow& \mathbf{D}_{W}^{-1} + \mathbf{D}_K \succeq \left(\mu-1\right) \mathbf{I}.
\end{eqnarray}
The equation (\ref{apendeq5_3}) always holds since $\mathbf{D}_K$ is defined in (\ref{apendeq3_3}). Therefore, the inequality (\ref{apendeq1_1}) is also satisfied.

We know that $\boldsymbol{\Sigma}_{X^*}$ is $(\mu-1)^{-1}\boldsymbol{\Sigma}_{\tilde{V}} -  \boldsymbol{\Sigma}_{\tilde{W}}$. Therefore,
\begin{eqnarray}
\label{apendeq6_1}
\boldsymbol{\Sigma}_{X^*} \mathbf{K} & = & \left(\mu-1 \right)^{-1} \left(\boldsymbol{\Sigma}_{\tilde{V}} - \left(\mu-1\right) \boldsymbol{\Sigma}_{\tilde{W}}\right) \mathbf{K},
\end{eqnarray}
and the equation (\ref{apendeq6_1}) is re-written as
\begin{eqnarray}
\label{apendeq7_1}
\boldsymbol{\Sigma}_{X^*} \mathbf{K} & = & \left(\mu-1 \right)^{-1} \left(\boldsymbol{\Sigma}_{\tilde{V}} - \left(\mu-1\right) \boldsymbol{\Sigma}_{\tilde{W}}\right) \mathbf{K}\\
\label{apendeq7_2}
& = & \left(\mu-1 \right)^{-1} \left(\mathbf{Q}^{-T}\mathbf{Q}^{-1} - \left(\mu-1\right)\left(\left( \mathbf{Q}^{-T}\mathbf{D}_{W}\mathbf{Q}^{-1}\right)^{-1}+\mathbf{Q}\mathbf{D}_K\mathbf{Q}^T \right)^{-1}\right) \mathbf{Q}\mathbf{D}_K \mathbf{Q}^{T}\\
\label{apendeq7_3}
& = & \left(\mu-1 \right)^{-1} \mathbf{Q}^{-T}\left(\mathbf{I} - \left(\mu-1\right) \left(\mathbf{D}_{W}^{-1}+\mathbf{D}_K \right)^{-1}\right) \mathbf{D}_K \mathbf{Q}^{T}\\
\label{apendeq7_5}
& = & \mathbf{0}.
\end{eqnarray}
The equality (\ref{apendeq7_2}) is due to the equations (\ref{apendeq3_1}), (\ref{apendeq3_2}), and (\ref{apendeq4_1}), and the equality (\ref{apendeq7_5}) is due to (\ref{apendeq3_3}). Similarly,
\begin{eqnarray}
 \mathbf{K} \boldsymbol{\Sigma}_{X^*}& = & \left(\mu-1 \right)^{-1} \mathbf{K}\left(\boldsymbol{\Sigma}_{\tilde{V}} - \left(\mu-1\right) \boldsymbol{\Sigma}_{\tilde{W}}\right) \nonumber\\
& = &  \left(\mu-1 \right)^{-1} \mathbf{Q}\mathbf{D}_K \mathbf{Q}^{T} \left(\mathbf{Q}^{-T}\mathbf{Q}^{-1} - \left(\mu-1\right)\left(\left( \mathbf{Q}^{-T}\mathbf{D}_{W}\mathbf{Q}^{-1}\right)^{-1}+\mathbf{Q}\mathbf{D}_K\mathbf{Q}^T \right)^{-1}\right) \nonumber\\
& = & \left(\mu-1 \right)^{-1} \mathbf{Q} \mathbf{D}_K\left(\mathbf{I} - \left(\mu-1\right) \left(\mathbf{D}_{W}^{-1}+\mathbf{D}_K \right)^{-1}\right)  \mathbf{Q}^{-1}\nonumber\\
& = & \mathbf{0}.\nonumber
\end{eqnarray}

Therefore, by defining $\boldsymbol{\Sigma}_{\tilde{W}}=(\boldsymbol{\Sigma}_W^{-1}+\mathbf{K})^{-1}$, we can make $\boldsymbol{\Sigma}_{\tilde{W}}$ satisfy
\begin{eqnarray}
\label{apendeq9_1}\boldsymbol{\Sigma}_{\tilde{W}} \preceq \left(\mu-1\right)^{-1} \boldsymbol{\Sigma}_{\tilde{V}},\hspace{5mm}
\label{apendeq9_2}\boldsymbol{\Sigma}_{X^*} \mathbf{K} =  \mathbf{K} \boldsymbol{\Sigma}_{X^*} =\mathbf{0},
\end{eqnarray}
and the proof is completed.
\begin{rem}
In Lemmas \ref{lem7} and  \ref{lem8}, we specify the structure of positive semi-definite matrices $\mathbf{L}$ and $\mathbf{K}$, and this yields additional details about  the structure of  the covariance matrix of the optimal solution.
\end{rem}

\section{A More Simplified Proof of the EEI}\label{appD}

The problem  in (\ref{EI_eq52_1})  is more appropriately re-formulated as follows:
\begin{eqnarray}
\label{EI_eq21_1}
\hspace{-10mm}&&\max_{f_{ \hat{X}}, f_{ Y}} \quad  \int \int f_{ X}(\mathbf{x}) f_{ \hat{V}}(\mathbf{y}-\mathbf{x}) \left(\mu\log f_{ Y}(\mathbf{y}) - \log f_{ \hat{X}}(\mathbf{x}) - \mu\left(\mu-1\right) \log f_{ \hat{V}}(\mathbf{y}-\mathbf{x})\right) d\mathbf{x} d\mathbf{y}\\
\hspace{-10mm}&&\text{s.t.}\quad\quad \int\int f_{ \hat{X}}(\mathbf{x}) f_{\hat{V}}(\mathbf{y}-\mathbf{x})d\mathbf{x}d\mathbf{y} = 1, \nonumber\\
\hspace{-10mm}&&\hspace{12mm} \int\int \left(y_i y_j -x_i x_j - \left(y-x\right)_i \left(y-x\right)_j \right)f_{ \hat{X}}(\mathbf{x}) f_{ \hat{V}}(\mathbf{y}-\mathbf{x})d\mathbf{x} d\mathbf{y} = 0, \nonumber\\
\hspace{-10mm}&&\hspace{12mm}\sum_{i=1}^{n}\sum_{j=1}^n\left( \int\int x_i x_j \xi_i \xi_j f_{ \hat{X}}(\mathbf{x}) f_{ \hat{V}}(\mathbf{y}-\mathbf{x})d\mathbf{x}d\mathbf{y} \right)\leq  \sum_{i=1}^{n}\sum_{j=1}^n\sigma^2_{ij} \xi_i \xi_j,\nonumber\\
\hspace{-10mm}&&\hspace{12mm} \int\int y_iy_j f_{ \hat{X}}(\mathbf{x}) f_{ \hat{V}}(\mathbf{y}-\mathbf{x})d\mathbf{x}d\mathbf{y} = {\sigma}^2_{ Y^*_{ij}}, \nonumber\\
\hspace{-10mm}&&\hspace{12mm} -\int \int f_{ \hat{X}}(\mathbf{x}) f_{ \hat{V}}(\mathbf{y}-\mathbf{x}) \log f_{ \hat{X}}(\mathbf{x}) d\mathbf{x} d\mathbf{y} = p_{ \hat{X}} ,\nonumber\\
\label{EI_eq21_5}
\hspace{-10mm}&&\hspace{12mm} f_{ Y}(\mathbf{y}) = \int\int f_{ \hat{X}}(\mathbf{x}) f_{ \hat{V}}(\mathbf{y}-\mathbf{x})d\mathbf{x}d\mathbf{y},
\end{eqnarray}
where the  arbitrary deterministic non-zero vector $\boldsymbol{\xi}$ is defined as $[\xi_1, \ldots, \xi_n]^T$, ${\sigma}^2_{\scriptscriptstyle Y^*_{ij}}$ denotes  the $i^{\text{th}}$ row and  $j^{\text{th}}$ column element of $\boldsymbol{\Sigma}_{\scriptscriptstyle Y^*}$, $i=1,\ldots,n$, and $j=1,\ldots,n$.

Using Lagrange multipliers, the functional problem and its constraints in (\ref{EI_eq21_1}) are expressed as
\begin{eqnarray}
\label{EI_eq22_1}
\max_{f_{ \hat{X}}, f_{ Y}} \quad  \int \left(\int K(\mathbf{x},\mathbf{y},f_{ \hat{X}}, f_{ Y}) d\mathbf{x}\right) + \tilde{K}(\mathbf{y},f_{ Y})d\mathbf{y},\nonumber\\
\end{eqnarray}
where
\begin{eqnarray}
\label{EI_eq23_1}
K(\mathbf{x},\mathbf{y},f_{ \hat{X}}, f_{ Y}) & = & f_{ \hat{X}}(\mathbf{x})f_{ \hat{V}}(\mathbf{y}-\mathbf{x}) \Big(\mu\log f_{ Y}(\mathbf{y})-\log f_{ \hat{X}}(\mathbf{x}) - \mu\left(\mu-1\right) \log f_{ \hat{V}}(\mathbf{y}-\mathbf{x}) + \alpha_0\nonumber\\
&& + \sum_{i=1}^{n} \sum_{j=1}^{n} \Big(  \gamma_{ij} y_i y_j -  \gamma_{ij} x_i x_j - \gamma_{ij} \left(y-x\right)_i \left(y-x\right)_j +  \theta x_i x_j \xi_i \xi_j + \phi_{ij} y_i y_j \Big)\nonumber\\
&&\hspace{20mm}- \alpha_1 \log f_{ \hat{X}}(\mathbf{x}) - \lambda(\mathbf{y})\Big),\nonumber\\
\tilde{K}(\mathbf{y},f_{ Y}) & = & \lambda(\mathbf{y})f_{ Y}(\mathbf{y}),
\end{eqnarray}
where $\alpha_0$, $\alpha_1$, $\gamma_{ij}$, $\theta$, $\phi_{ij}$, and $\lambda(\mathbf{y})$ stand for the  Lagrange multipliers.

The  first-order variation condition is checked as follows:
\begin{eqnarray}
\label{EI_eq24_1}
K'_{f_{ \hat{X}}}\Big|_{f_{ \hat{X}}=f_{ \hat{X}^*}, f_{ Y}=f_{ Y^*}} & = &  0 \\
\label{EI_eq24_2}
\tilde{K}'_{f_{ Y}}\Big|_{f_{ Y}=f_{ \hat{X}^*}, f_{ Y}=f_{ Y^*}} & = & 0,
\end{eqnarray}
$K'_{f_{ \hat{X}}}$ and $\tilde{K}'_{f_{ Y}}$  are the first-order partial derivatives with respect to $f_{ \hat{X}}$ and $f_{ Y}$, respectively.\footnote{Throughout the paper, the arguments of functionals or functions are omitted unless the arguments are ambiguous or confusing.}

Since the equalities in (\ref{EI_eq24_1}) and (\ref{EI_eq24_2}) must be satisfied for any $\mathbf{x}$ and $\mathbf{y}$, one can easily obtain the following Gaussian density functions $f_{ \hat{X}^*}$ and $f_{ Y^*}$  as solutions:
\begin{eqnarray}
\label{EI_eq27_1}
f_{ Y^*}(y) & = & \left(2\pi\right)^{-\frac{n}{2}} \left|\boldsymbol{\Sigma}_{ Y^*}\right|^{-\frac{1}{2}}\exp\left\{-\frac{1}{2}\mathbf{y}^{ T} \boldsymbol{\Sigma}_{ Y^*}^{-1}\mathbf{y}\right\},\nonumber\\
f_{ \hat{X}^*}(x) & = & \left(2\pi \right)^{-\frac{n}{2}}\left|\boldsymbol{\Sigma}_{ \hat{X}^*}\right|^{-\frac{1}{2}}\exp\left\{-\frac{1}{2}\mathbf{x}^{ T} \boldsymbol{\Sigma}_{ \hat{X}^*}^{-1}\mathbf{x}\right\}.
\end{eqnarray}
Since all the Lagrange multipliers exist in this problem, the necessary optimal solutions $f_{ \hat{X}^*}$ and $f_{ Y^*}$ exist even though the original problem is non-convex in general.

To make the second variation positive,  the negative-definiteness of the following matrix is required:
\begin{eqnarray}
\label{EI_eq34_1}
\left[\begin{array}{cc}
K''_{f_{ \hat{X}^*} f_{ \hat{X}^*}} &  K''_{f_{ \hat{X}^*} f_{ Y^*}}\\
K''_{f_{ Y^*} f_{ \hat{X}^*}} &  K''_{f_{ Y^*} f_{ Y^*}}
\end{array}\right],
\end{eqnarray}
where $K''_{f_{ \hat{X}^*} f_{ \hat{X}^*}}$ and $K''_{f_{ Y^*} f_{ Y^*}}$ stand for the second-order partial derivatives with respect to $f_{\hat{X}^*}$ and $f_{Y^*}$, respectively, and $K''_{f_{ \hat{X}^*} f_{ Y^*}}$ denotes  the second-order partial derivative with respect to $f_{ \hat{X}^*}$ and $f_{ Y^*}$. Thus,   the following condition is required to hold:
\begin{eqnarray}
\label{EI_eq35_1}
&&\left[\begin{array}{cc} h_{ \hat{X}} & h_{ Y}\end{array} \right] \left[\begin{array}{cc}
K''_{f_{ \hat{X}^*} f_{ \hat{X}^*}} &  K''_{f_{ \hat{X}^*} f_{ Y^*}}\\
K''_{f_{ Y^*} f_{ \hat{X}^*}} &  K''_{f_{ Y^*} f_{ Y^*}}
\end{array}\right]
\left[\begin{array}{c} h_{ \hat{X}} \\ h_{ Y}\end{array} \right] \nonumber\\
& = & K''_{f_{ \hat{X}^*} f_{ \hat{X}^*}} h_{ \hat{X}}^2 + K''_{f_{ Y^*} f_{ Y^*}} h_{ Y}^2 +  (K''_{f_{ \hat{X}^*} f_{ Y^*}}+K''_{f_{ Y^*} f_{ \hat{X}^*}}) h_{ Y} h_{ \hat{X}}\nonumber\\
& \leq & 0,
\end{eqnarray}
where $h_{ \hat{X}}$ and $h_{ Y}$ are arbitrary admissible functions.

Since $K''_{f_{ \hat{X}^*} f_{ \hat{X}^*}}$, $K''_{f_{ \hat{X}^*} f_{ Y^*}}$, $K''_{f_{ Y^*} f_{ \hat{X}^*}}$, and $K''_{f_{ Y^*} f_{ Y^*}}$ are defined as
\begin{eqnarray}
\label{EI_eq36_1}
K''_{f_{ \hat{X}^*} f_{ \hat{X}^*}} & = & -\frac{(1-\alpha_1)f_{ \hat{V}}(\mathbf{y}-\mathbf{x})}{f_{ X^*}(\mathbf{x})},\nonumber\\
K''_{f_{ \hat{X}^*} f_{ Y^*}} & = & \frac{\mu f_{ \hat{V}}(\mathbf{y}-\mathbf{x})}{f_{ Y^*}(\mathbf{y})},\nonumber\\
K''_{f_{ Y^*} f_{ \hat{X}^*}} & = & \frac{\mu f_{ \hat{V}}(\mathbf{y}-\mathbf{x})}{f_{ Y^*}(\mathbf{y})},\nonumber\\
K''_{f_{ Y^*} f_{ Y^*}} & = & -\frac{\mu f_{ X^*}(\mathbf{x}) f_{ \hat{V}}(\mathbf{y}-\mathbf{x})}{f_{ Y^*}(\mathbf{y})^2},
\end{eqnarray}
the equation in (\ref{EI_eq35_1}) requires
\begin{eqnarray}
\label{EI_eq37_1}
&& -\frac{(1-\alpha_1) f_{ \hat{V}} (\mathbf{y}-\mathbf{x})}{f_{ \hat{X}^*}(\mathbf{x})}h_{ \hat{X}}(\mathbf{x})^2 + 2\frac{\mu f_{ \hat{V}}(\mathbf{y}-\mathbf{x})}{f_{ Y^*}(\mathbf{y})}h_{ \hat{X}}(\mathbf{x})h_{ Y}(\mathbf{y}) - \frac{\mu f_{ \hat{X}^*}(\mathbf{x})f_{ \hat{V}}(\mathbf{y}-\mathbf{x})}{f_{ Y^*}(\mathbf{y})^2}h_{ Y}(\mathbf{y})^2\nonumber\\
& \leq & -\frac{\mu f_{ \hat{V}}(\mathbf{y}-\mathbf{x})}{f_{ \hat{X}^*}(\mathbf{x})} \left( h_{ \hat{X}}(\mathbf{x}) - \frac{f_{ \hat{X}^*}(\mathbf{x})}{f_{ Y^*}(\mathbf{y})}h_{ Y}(\mathbf{y}) \right)^2\nonumber\\
& \leq & 0,
\end{eqnarray}
where $\alpha_1 \geq 1-\mu$.

Therefore, the optimal solutions $f_{\hat{X}^*}$ and $f_{ Y^*}$ maximize the functional problem in (\ref{EI_eq52_1}), and the proof is completed.

\section{Details of an application for broadcasting channel with a private message}\label{appC}

Using Lemma \ref{lem8}, we can define a covariance matrix $\boldsymbol{\Sigma}_{\tilde{Z}_{G_1}}$ which satisfies $\boldsymbol{\Sigma}_{\tilde{Z}_{G_1}} \preceq \boldsymbol{\Sigma}_{Z_{G_1}}$ and $\boldsymbol{\Sigma}_{\tilde{Z}_{G_1}} \preceq \boldsymbol{\Sigma}_{{Z}_{G_2}}$ as follows:
\begin{eqnarray}
\label{appC_eq1_1}
\boldsymbol{\Sigma}_{\tilde{Z}_{G_1}} & = & \left(\boldsymbol{\Sigma}_{Z_{G_1}}+\mathbf{K}\right)^{-1},\nonumber
\end{eqnarray}
where $\mathbf{K}$ is a positive semi-definite matrix,  defined similarly to  the one in Lemma \ref{lem8}.

Since
\begin{eqnarray}
\label{appC_eq2_1}
\boldsymbol{\Sigma}_{X|Y_2} & = & \boldsymbol{\Sigma}_{Z_{G_2}} -  \boldsymbol{\Sigma}_{Z_{G_2}} \mathbf{J}\left(X+Z_{G_2}\right) \boldsymbol{\Sigma}_{Z_{G_2}},
\end{eqnarray}
where $\mathbf{J}\left(X+Z_{G_2}\right)$ denotes the Fisher information matrix of the random vector $X+Z_{G_2}$ \cite{EPI:Rioul}, by changing the covariance matrix of $X$, we can always find $X$, whose posterior covariance matrix $\boldsymbol{\Sigma}_{X|Y_2}$ satisfies $\mathrm{Tr}\{\boldsymbol{\Sigma}_{X|Y_2}\} = \mathrm{Tr}\{\mathbf{R}\}$.

Then the random vector $X$ satisfies the following relationship:
\begin{eqnarray}
\label{appC_eq3_1}
\mathrm{Tr}\{\boldsymbol{\Sigma}_{X|\tilde{Y}_1}\} \leq \mathrm{Tr}\{\boldsymbol{\Sigma}_{X|Y_2}\} = \mathrm{Tr}\{\mathbf{R}\},
\end{eqnarray}
where $\tilde{Y}_1 = X + \tilde{Z}_{G_1}$. The first inequality is due to the data processing inequality \cite{EPI:Rioul}.

Using Cram\'{e}r-Rao inequality \cite{EPI:Rioul}, we can choose a Gaussian random vector $X_G^*$, whose covariance matrix $\boldsymbol{\Sigma}_{X_G^*}$ satisfies the following:
\begin{eqnarray}
\label{appc_eq0_1}
\boldsymbol{\Sigma}_{X_G^*} + \boldsymbol{\Sigma}_{Z_{G_2}} & = & \mathbf{J}\left(X+Z_{G_2}\right)^{-1}\\
& \preceq & \mathbf{J}\left(X_G+Z_{G_2}\right)^{-1} \nonumber\\
& = & \boldsymbol{\Sigma}_{X_G} + \boldsymbol{\Sigma}_{Z_{G_2}}, \nonumber
\end{eqnarray}
and
\begin{eqnarray}
\label{appc_eq1_1}
\boldsymbol{\Sigma}_{X_G^*}  \preceq \boldsymbol{\Sigma}_{X_G}.
\end{eqnarray}
Therefore, for any random vector $X$, whose covariance matrix $\boldsymbol{\Sigma}_{X}$ satisfies $\mathrm{Tr}\{\boldsymbol{\Sigma}_{X|Y_2}\} = \mathrm{Tr}\{\mathbf{R}\}$, we can find a Gaussian random vector $X_G^*$, whose covariance matrix satisfies the relationship in (\ref{appc_eq1_1}). Also, due to the equations in (\ref{appC_eq2_1}) and (\ref{appc_eq0_1}),
\begin{eqnarray}
\boldsymbol{\Sigma}_{X_G^*|Y_2^*} = \boldsymbol{\Sigma}_{X|Y_2} ,\nonumber
\end{eqnarray}
where $Y_2^*=X_G^*+Z_{G_2}$.

Now, based on Lemma \ref{lem8}, we will show $\boldsymbol{\Sigma}_{X_G^*|\tilde{Y}_1}=\boldsymbol{\Sigma}_{X_G^*|{Y}_1}$ as follows.
Since $Y^*_1 = X_G^* + Z_{G_1} =  X_G^* + \tilde{Z}_{G_1} + \hat{Z}_{G_1}$, we can construct a Markov chain as
\begin{eqnarray}
\label{appC_eq6_1}
X_G^* \longrightarrow X_G^* + \tilde{Z}_{G_1}  \longrightarrow  X_G^* + \tilde{Z}_{G_1} + \hat{Z}_{G_1},
\end{eqnarray}
where $\hat{Z}_{G_1}$ is a Gaussian random vector with the covariance matrix $\boldsymbol{\Sigma}_{\hat{Z}_{G_1}}$, and it satisfies $\boldsymbol{\Sigma}_{{Z}_{G_1}}=\boldsymbol{\Sigma}_{\tilde{Z}_{G_1}}+\boldsymbol{\Sigma}_{\hat{Z}_{G_1}}$.

The Markov chain in (\ref{appC_eq6_1}) is the same as the one in (\ref{thm4e4_1}), and therefore, based on Lemma \ref{lem8}, we can obtain the  Markov chain:
\begin{eqnarray}
\label{appC_eq7_1}
X_G^* \longrightarrow  X_G^* + \tilde{Z}_{G_1} + \hat{Z}_{G_1}   \longrightarrow X_G^* + \tilde{Z}_{G_1},\nonumber
\end{eqnarray}
and this Markov chain is the same as the one in  (\ref{thm4e4_2}). In this case, $\boldsymbol{\Sigma}_{\tilde{Z}_{G_1}} = \left(\boldsymbol{\Sigma}_{Z_{G_1}}^{-1}+\mathbf{K}\right)^{-1}$, and $\boldsymbol{\Sigma}_{X_G^*}$ is defined as $\alpha \boldsymbol{\Sigma}_{\tilde{Z}_{G_2}} - \boldsymbol{\Sigma}_{\tilde{Z}_{G_1}}$, where $\tilde{Z}_{G_2}$ and $\hat{Z}_{G_2}$ are Gaussian random vectors with covariance matrices $\boldsymbol{\Sigma}_{\tilde{Z}_{G_2}}$ and $\boldsymbol{\Sigma}_{\hat{Z}_{G_2}}$, respectively, $Z_{G_2}=\tilde{Z}_{G_1}+\tilde{Z}_{G_2}+\hat{Z}_{G_2}$, $\boldsymbol{\Sigma}_{{Z}_{G_2}}=\boldsymbol{\Sigma}_{\tilde{Z}_{G_1}}+\boldsymbol{\Sigma}_{\tilde{Z}_{G_2}}+\boldsymbol{\Sigma}_{\hat{Z}_{G_2}}$, and all random vectors are independent of one another. The positive semi-definite matrix $\mathbf{K}$ is defined as the one in Lemma \ref{lem8}. The constant $\alpha$ must be chosen to satisfy  the equation in (\ref{appc_eq0_1}). By defining the matrix $\boldsymbol{\Sigma}_{\tilde{Z}_{G_2}}$ as follows:
\begin{eqnarray}
\boldsymbol{\Sigma}_{\tilde{Z}_{G_2}} = \left(\left(\boldsymbol{\Sigma}_{\tilde{X}} + \boldsymbol{\Sigma}_{{Z}_{G_2}}  \right)^{-1} + \mathbf{L}\right)^{-1} -  \boldsymbol{\Sigma}_{{X}} -  \boldsymbol{\Sigma}_{\tilde{Z}_{G_1}},\nonumber
\end{eqnarray}
where  matrix $\mathbf{L}$ is similarly defined as the one in Lemma \ref{lem7}, the existence of such $X_G^*$ is guaranteed.

Therefore, by choosing a Gaussian random vector $X_G^*$ as mentioned previously,
\begin{eqnarray}
\mathrm{Tr}\left\{\boldsymbol{\Sigma}_{X_G^*|Y_1^*}\right\} \preceq \mathrm{Tr}\left\{\boldsymbol{\Sigma}_{X_G^*|Y_2^*}\right\} = \mathrm{Tr}\left\{\mathbf{R}\right\},\nonumber
\end{eqnarray}
and the covariance matrix $\boldsymbol{\Sigma}_{X_G^*}$ is the minimum value with respect to the positive semi-definite partial ordering, and the proof is completed.

\end{document}